\documentclass[11pt]{article}
\usepackage[english]{babel}
\usepackage{amsmath,amsfonts,amsthm,amssymb,latexsym,wasysym}
\usepackage{color,graphicx,fancyhdr,mathrsfs,nicefrac,framed}
\usepackage{tikz}
\usepackage{enumerate,comment}
\usepackage{url}
\usepackage{subfigure}
\usepackage[colorlinks=true,citecolor=blue]{hyperref}
\usepackage{indentfirst}
\usepackage{multirow}
\usepackage{stackrel}

\usepackage[toc,page]{appendix}
\usepackage{natbib}
\usepackage[affil-it]{authblk}


\theoremstyle{plain}
\newtheorem{theorem}{Theorem}[section]
\newtheorem{lemma}[theorem]{Lemma}
\newtheorem{proposition}[theorem]{Proposition}

\theoremstyle{definition}
\newtheorem{definition}[theorem]{Definition}

\newtheorem{remark}[theorem]{Remark}


\newcommand{\esssup}{\mathop {\rm ess\,sup}\nolimits}
\newcommand{\essinf}{\mathop {\rm ess\,inf}\nolimits}

\newcommand{\VaR}{\mathop {\rm VaR}\nolimits}

\newcommand{\ES}{\mathop {\rm ES}\nolimits}

\newcommand{\LVaR}{\mathop {\rm LVaR}\nolimits}

\newcommand{\rag}[1][]{\rho_{g_{#1}}}

\newcommand{\precSSD}{\le_{\rm ssd}}

\newcommand{\cA}{\mathcal A}

\newcommand{\cF}{\mathcal F}
\newcommand{\cG}{\mathcal G}
\newcommand{\cH}{\mathcal H}

\newcommand{\cP}{\mathcal P}

\newcommand{\cU}{\mathcal U}

\newcommand{\cX}{\mathcal X}

\newcommand{\E}{\mathbb E}

\newcommand{\N}{\mathbb N}
\newcommand{\probp}{\mathbb P}
\newcommand{\probq}{\mathbb Q}
\newcommand{\R}{\mathbb R}

\newcommand{\Q}{\mathbb{Q}}

\newcommand{\X}{\mathcal{X}}

\newcommand{\p}{\mathbb{P}}

\renewcommand{\ge}{\geqslant}
\renewcommand{\le}{\leqslant}
\renewcommand{\geq}{\geqslant}
\renewcommand{\leq}{\leqslant}



\newcommand{\ic}{\Box}
\newcommand{\icn}[2]{\stackrel[\scriptscriptstyle {#1}=1]{\scriptscriptstyle {#2}}{\Box}}

\newcommand{\risk}{\ES}
\newcommand{\rg}{\ES^g}
\newcommand{\rh}{\ES^h}
\newcommand{\rng}[1]{\risk^{#1}} 

\newcommand{\SSD}{\ge_{\rm SSD}}



\oddsidemargin0cm
\topmargin-1.4cm
\textheight23.5cm
\textwidth16cm

\addtolength{\textheight}{.5\baselineskip}
\def\d{\mathrm{d}}

\makeatletter
\let\@fnsymbol\@arabic
\makeatother

\setlength{\bibsep}{0pt}

\topmargin -1.30cm \oddsidemargin -0.00cm \evensidemargin 0.0cm
\textwidth 16.56cm \textheight 23.20cm

\parindent 5ex

\usepackage[compact]{titlesec}

\makeatletter
\DeclareRobustCommand{\bsquare}{%
  \mathop{\vphantom{\sum}\mathpalette\bigstar@\relax}\slimits@
}

\newcommand{\bigstar@}[2]{%
  \vcenter{%
    \sbox\z@{$#1\sum$}%
    \hbox{\resizebox{.9\dimexpr\ht\z@+\dp\z@}{!}{$\m@th\dsquare$}}%
  }%
}
\makeatother

\newcommand{\dsquare}{\mathop{  \square} \displaylimits}

\begin{document}

 \begin{center}
{\Large Adjusted Expected Shortfall}
\\
Matteo Burzoni\\
{\em Department of Mathematics, University of Milan, Italy}\\
\url{matteo.burzoni@unimi.it}\\
Cosimo Munari\\
{\em Center for Finance and Insurance and Swiss Finance Institute, University of Zurich, Switzerland}\\
\url{cosimo.munari@bf.uzh.ch}\\
Ruodu Wang\footnote{Ruodu Wang is supported by Natural Sciences and Engineering Research Council of Canada (RGPIN-2018-03823, RGPAS-2018-522590).}\\
{\em Department of Statistics and Actuarial Science, University of Waterloo, Canada }\\
\url{wang@uwaterloo.ca}\\
\today
\end{center}
 \begin{abstract}
\noindent
We introduce and study the main properties of a class of convex risk measures that refine Expected Shortfall by simultaneously controlling the expected losses associated with different portions of the tail distribution. The corresponding adjusted Expected Shortfalls quantify risk as the minimum amount of capital that has to be raised and injected into a financial position $X$ to ensure that Expected Shortfall $\ES_p(X)$ does not exceed a pre-specified threshold $g(p)$ for every probability level $p\in[0,1]$. Through the choice of the benchmark risk profile $g$ one can tailor the risk assessment to the specific application of interest. We devote special attention to the study of risk profiles defined by the Expected Shortfall of a benchmark random loss, in which case our risk measures are intimately linked to second-order stochastic dominance.
\end{abstract}

\parindent 0em \noindent

\section{Introduction}

\noindent In this paper we introduce and discuss the main properties of a new class of quantile-based risk measures. Following the seminal paper by \cite{ADEH99}, we view a risk measure as a capital requirement rule. More precisely, we quantify risk as the minimal amount of capital that has to be raised and invested in a pre-specified financial instrument (which is typically taken to be risk free) to confine future losses within a pre-specified acceptable level of security. Value at Risk (VaR) and Expected Shortfall (ES) are the most prominent examples of monetary risk measures in the above sense. Throughout, we always adopt the convention to assign positive values to losses. Under VaR, a financial position is acceptable if its loss probability does not exceed a given threshold. In line with our convention, this means that VaR coincides the lower quantile of the underlying distribution at an appropriate level. Under ES, a financial position is acceptable if, on average, it does not produce a loss beyond a given VaR. In the banking regulatory sector, the Basel Committee has recently decided to move from VaR at level $99\%$ to ES at level $97.5\%$ for the measurement of financial market risk. In the insurance regulatory sector, VaR at level $99.5\%$ is the reference risk measure in the Solvency II and in the forthcoming Insurance Capital Standard framework while ES at level $99\%$ is the reference risk measure in the Swiss Solvency Test framework. In the past 20 years, an impressive body of research has investigated the relative merits and drawbacks of VaR and ES at both a theoretical and a practical level. This investigation led to a better understanding of the properties of these two risk measures at the same time triggering a variety of new research questions about risk measures in general. We refer to early work on ES in \cite{acerbi2002coherence} \cite{acerbi2002spectral}, \cite{frey2002var}, and \cite{rockafellar2002conditional} (where ES was called Conditional VaR). Some recent contributions to the broad investigation on whether and to what extent VaR and ES meet regulatory objectives are \cite{KochMunari2016}, \cite{EmbrechtsLiuWang2018}, \cite{weber2018solvency}, \cite{BBM18}, \cite{BaesKochMunari2020}, and \cite{WZ20}. For robustness problems concerning VaR and ES, see, e.g., \cite{CDS10} and \cite{KratschmerSchiedZahle2014}, and for their backtesting, see, e.g., \cite{Ziegel2016}, \cite{DuEscanciano2017}, and \cite{KratzLokMcNeil2018}.

\smallskip

A fundamental difference between VaR and ES is that, by definition, VaR is completely blind to the behavior of the loss tail beyond the reference quantile whereas ES depends on the whole tail beyond it. It is often argued that this difference, together with the convexity property, makes ES a superior risk measure compared to VaR. In fact, this is the main motivation that led the Basel Committee to shift from VaR to ES in their market risk framework; see \cite{BASEL35}. However, every risk measure captures risk in a specific manner and, as such, is bound to possess some limitations. This is also the case of ES. Indeed, 
being essentially an average beyond a given quantile, ES can only provide an aggregate estimation of risk which, by its very definition, {\em does not distinguish across different tail behaviors with the same mean}. While in specific situations a finer risk classification can be obtained by means of other risk measures, including spectral and deviation risk measures, our goal is to introduce a general class of convex risk measures that help make that distinction by using {\em ES as their fundamental building block}. The advantage of this approach is that it can be directly linked to a regulatory framework based on ES. To this end, we construct a risk measure that is sensitive to changes in the {\em ES profile} of a random variable $X$, i.e., the curve of ES
\[
p \mapsto \ES_p(X)
\]
viewed as a function of the underlying confidence level. More specifically, we  ``adjust'' $\ES$ into
\[
\ES^g(X) := \sup_{p\in[0,1]}\{\ES_p(X)-g(p)\}
\]
where $g:[0,1]\to(-\infty,\infty]$ is a given increasing function. The risk measure $\ES^g$ is called the {\em adjusted ES with risk profile $g$} and is a monetary risk measure in the sense of \cite{ADEH99}. Indeed, the quantity $\ES^g(X)$ can be interpreted as the minimal amount of cash that has to be raised and injected into $X$ in order to ensure the following target solvency condition:
\[
\ES^g(X)\leq0 \ \iff \ \mbox{$\ES_p(X)\leq g(p)$ for every $p\in[0,1]$}.
\]
In this sense, the function $g$ defines the threshold between acceptable and unacceptable ES profiles. Interestingly, $\ES^g$ is a convex risk measure but is not coherent unless it reduces to a standard ES.

\smallskip

The goal of this paper is to introduce the class of adjusted ES's and discuss their main theoretical properties. In Section \ref{sect: definition} we provide a formal definition and a useful representation of adjusted ES together with a number of illustrations. The focus of Section~\ref{sect: basic properties} is on some basic mathematical properties. A special interesting case is when the risk profile $g$ is given by the ES of a benchmark random variable. We focus on this situation in Section \ref{sect: SSD} and show that such special adjusted ES's are strongly linked with second-order stochastic dominance. More precisely, they coincide with the monetary risk measures for which acceptability is defined in terms of carrying less risk, in the sense of second-order stochastic dominance, than a given benchmark random variable. In Section \ref{sect: SSD optimization} we focus on a variety of optimization problems featuring risk functionals either in the objective function or in the optimization domain and study the existence of optimal solutions in the presence of this type of risk measures. In each case of interest we are able to establish explicit optimal solutions.


\section{Introducing adjusted ES}
\label{sect: definition}

Throughout the paper we fix an atomless probability space $(\Omega,\mathcal F,\p)$ and denote by $L^1$ the space of (equivalent classes with respect to $\p$-almost sure equality of) $\p$-integrable random variables. For any two random variables $X,Y\in L^1$ we write $X\sim Y$ whenever $X$ and $Y$ are identically distributed. We adopt the convention that positive values of $X\in L^1$ correspond to losses. In this setting, Value at Risk (VaR) and Expected Shortfall (ES) are respectively defined as
$$
\VaR_p(X):=
\begin{cases}
\inf \{x\in \R\mid \p(X\le x)\ge p\} & \mbox{if} \ p\in(0,1],\\
\essinf X & \mbox{if} \ p=0,
\end{cases}
$$
$$
\ES_p(X):=
\begin{cases}
\frac 1 {1-p} \int_p^1 \VaR_q(X)\d q & \mbox{if} \ p\in[0,1),\\
\esssup X & \mbox{if} \ p=1.
\end{cases}
$$
The quantities $\VaR_p(X)$ and $\ES_p(X)$ represent the minimal amount of cash that has to be raised and injected into $X$ in order to ensure the following target solvency condition (for $0<p<1$):
\[
\VaR_p(X)\leq0 \ \iff \ \p(X\leq0)\geq p,
\]
\[
\ES_p(X)\leq0 \ \iff \ \int_p^1 \VaR_q(X)\d q\leq0.
\]
The VaR solvency condition requires that the loss probability of $X$ is capped by $1-p$ whereas the ES solvency condition states that there is no loss on average beyond the (left) $p$-quantile of $X$. 

\smallskip

The focus of the paper is on the following class of risk measures. Here and in the sequel, we denote by $\cG$ the set of all functions $g:[0,1]\to(-\infty,\infty]$ that are increasing (in the non-strict sense) and not identically $\infty$. Moreover, we use the convention $\infty-\infty=-\infty$.

\begin{definition}
Consider a function $g\in\cG$ and define the set
$$
\cA_g := \{X\in L^1 \mid \forall p\in[0,1], \ \ES_p(X)\leq g(p)\}.
$$
The functional $\rg:L^1\to(-\infty,\infty]$ defined by
$$
\rg(X) := \inf\{m\in\R \mid X-m\in\cA_g\}.
$$
is called the {\em $g$-adjusted Expected Shortfall (g-adjusted ES)}.
\end{definition}

\smallskip

To best appreciate the financial interpretation of the above risk measure, it is useful to consider the {\em ES profile} associated with a random variable $X\in L^1$, i.e., the function
$$
p \mapsto \ES_p(X).
$$
From this perspective, the function $g$ in the preceding definition can be interpreted as a threshold between acceptable (safe) and unacceptable (risky) ES profiles. In this sense, the set $\cA_g$ consists of all the positions with acceptable ES profile and the quantity $\rg(X)$ represents the minimal amount of capital that has to be injected into $X$ in order to align its ES profile with the chosen acceptability profile. For this reason, we will sometimes refer to $g$ as the target {\em ES profile} or, more generally, the target {\em risk profile}. If, for given $p\in[0,1]$, we consider the target ES profile
$$
g(q)=
\begin{cases}
0 & \mbox{if} \ q\in[0,p],\\
\infty & \mbox{if} \ q\in(p,1],
\end{cases}
$$
then $\rg(X)=\ES_p(X)$ for every random variable $X\in L^1$. In words, the standard ES is a special case of an adjusted ES. The next proposition highlights an equivalent but operationally preferable formulation of adjusted ES's which also justifies the chosen terminology.

\begin{proposition}
\label{prop: cash additive representation}
For every risk profile $g\in\cG$ and for every $X\in L^1$ we have
$$
\rg(X) = \sup_{p\in[0,1]}\{\ES_p(X)-g(p)\}.
$$
\end{proposition}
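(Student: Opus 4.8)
The plan is to exploit that each map $X\mapsto\ES_p(X)$ is cash-additive (translation invariant), so that membership of $X-m$ in $\cA_g$ can be read off as a single scalar inequality on $m$, after which the defining infimum inverts to the claimed supremum.

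First I would record the translation invariance $\ES_p(X-m)=\ES_p(X)-m$ for every $m\in\R$ and every $p\in[0,1]$. For $p\in[0,1)$ this follows from $\VaR_q(X-m)=\VaR_q(X)-m$ together with the integral representation of $\ES_p$; for $p=1$ it follows from $\esssup(X-m)=\esssup X-m$. I would also note that $\ES_p(X)\in\R$ for $p\in[0,1)$ whenever $X\in L^1$, since the quantile function of an integrable random variable is integrable on $[0,1]$, whereas only $\ES_1(X)=\esssup X$ may take the value $+\infty$. Next I fix $X\in L^1$ and $m\in\R$ and rewrite the acceptance condition: by translation invariance, $X-m\in\cA_g$ is equivalent to $\ES_p(X)-m\leq g(p)$ for every $p\in[0,1]$, which rearranges to $m\geq\ES_p(X)-g(p)$ for every $p$, i.e.\ to $m\geq\sup_{p\in[0,1]}\{\ES_p(X)-g(p)\}$.

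It then remains to invert the infimum. Writing $s:=\sup_{p\in[0,1]}\{\ES_p(X)-g(p)\}$, the admissible set is $\{m\in\R : m\geq s\}$, whose infimum is $s$ when $s\in\R$, and which is empty (with infimum $+\infty$) when $s=+\infty$; in either case $\rg(X)=s$, as claimed.

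The main obstacle is bookkeeping with the extended-real conventions rather than anything analytically deep. I would verify that the rearrangement $\ES_p(X)-m\leq g(p)\iff m\geq\ES_p(X)-g(p)$ stays valid in every degenerate case: when $g(p)=+\infty$ the constraint is vacuous on both sides; when $\ES_p(X)=+\infty$ (possible only at $p=1$) and $g(p)<\infty$ both sides fail for every real $m$; and when both are $+\infty$ the convention $\infty-\infty=-\infty$ keeps the equivalence intact. Finally, since $g\in\cG$ is increasing and not identically $+\infty$, there is some $p_0<1$ with $g(p_0)<\infty$, forcing $s\geq\ES_{p_0}(X)-g(p_0)>-\infty$; this confirms that the supremum never equals $-\infty$, consistent with $\rg$ taking values in $(-\infty,\infty]$.
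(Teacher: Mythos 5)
Your proof is correct and follows essentially the same route as the paper: use cash additivity $\ES_p(X-m)=\ES_p(X)-m$ to rewrite $X-m\in\cA_g$ as $m\geq\ES_p(X)-g(p)$ for all $p$, invoking the convention $\infty-\infty=-\infty$ to keep the equivalence valid at $p=1$, and then read off the infimum over $m$. Your extra bookkeeping (the case analysis of the degenerate values and the observation that $g(0)<\infty$ forces the supremum to exceed $\E[X]-g(0)>-\infty$) only makes explicit what the paper leaves implicit.
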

\begin{proof}
Fix $X\in L^1$ and note that for every $m\in\R$ the condition $X-m\in\cA_g$ is equivalent to
\[
\ES_p(X)-m = \ES_p(X-m) \leq g(p)
\]
for every $p\in[0,1]$. For $p=1$ both sides could be equal to $\infty$. However, in view of our convention $\infty-\infty=-\infty$, the above inequality holds if and only if $m\geq\ES_p(X)-g(p)$ for every $p\in[0,1]$. The desired representation easily follows.
\end{proof}

\smallskip

\begin{remark}
(i) In line with our main motivation, the adjusted ES is a tool that allows us to distinguish risks with the same tail expectation without leaving the world of ES. In the context of the discussion on tail risk triggered by \cite{BASEL35}, the authors of \cite{LiuWang2019} proposed the following way to quantify the degree of tail blindness of a risk measure: For a given $p\in(0,1)$, a functional $\rho:L^1\to(-\infty,\infty]$ satisfies the {\em $p$-tail property} if for all $X,Y\in L^1$
\[
\mbox{$\VaR_q(X)=\VaR_q(Y)$ for every $q\in[p,1)$} \ \implies \ \rho(X)=\rho(Y).
\]
In this case, $\rho$ does not distinguish between two random losses having the same (left) quantiles beyond level $p$. It is not difficult to prove that $\ES^g$ satisfies the $p$-tail property if and only if $g$ is constant on the interval $(0,p)$. This provides a simple way to tailor the tail sensitivity of $\ES^g$.

\smallskip

(ii) The definition of $\ES^g$ is reminiscent of the {\em Loss Value at Risk} (LVaR) introduced in \cite{BBM18}. In that case, one takes an increasing and right-continuous function $\alpha:[0,\infty)\to[0,1]$ (the so-called benchmark loss distribution) and defines the acceptance set by
\[
\cA_\alpha := \{X\in L^1 \mid \probp(X>x)\leq\alpha(x), \ \forall x\geq0\}.
\]
The corresponding LVaR is given by
\[
\LVaR_\alpha(X) := \inf\{m\in\R \mid X-m\in\cA_\alpha\}.
\]
The quantity $\LVaR_\alpha(X)$ represents the minimal amount of capital that has to be injected into the position $X$ in order to ensure that, for each loss level $x$, the probability of exceeding a loss of size $x$ is controlled by $\alpha(x)$. According to Proposition 3.6 in the cited paper, we can equivalently write
\begin{equation}\label{eq:LVaReq}
\LVaR_\alpha(X) = \sup_{p\in[0,1]}\left\{\VaR_{p}(X)-\alpha^{-1}_+(p)\right\},
\end{equation}
where $\alpha^{-1}_+$ is the right inverse of $\alpha$. This highlights the similarity with adjusted ES's.
\end{remark}

\smallskip

\begin{figure}[t]
\centering
\subfigure{
\includegraphics[width=0.48\textwidth]{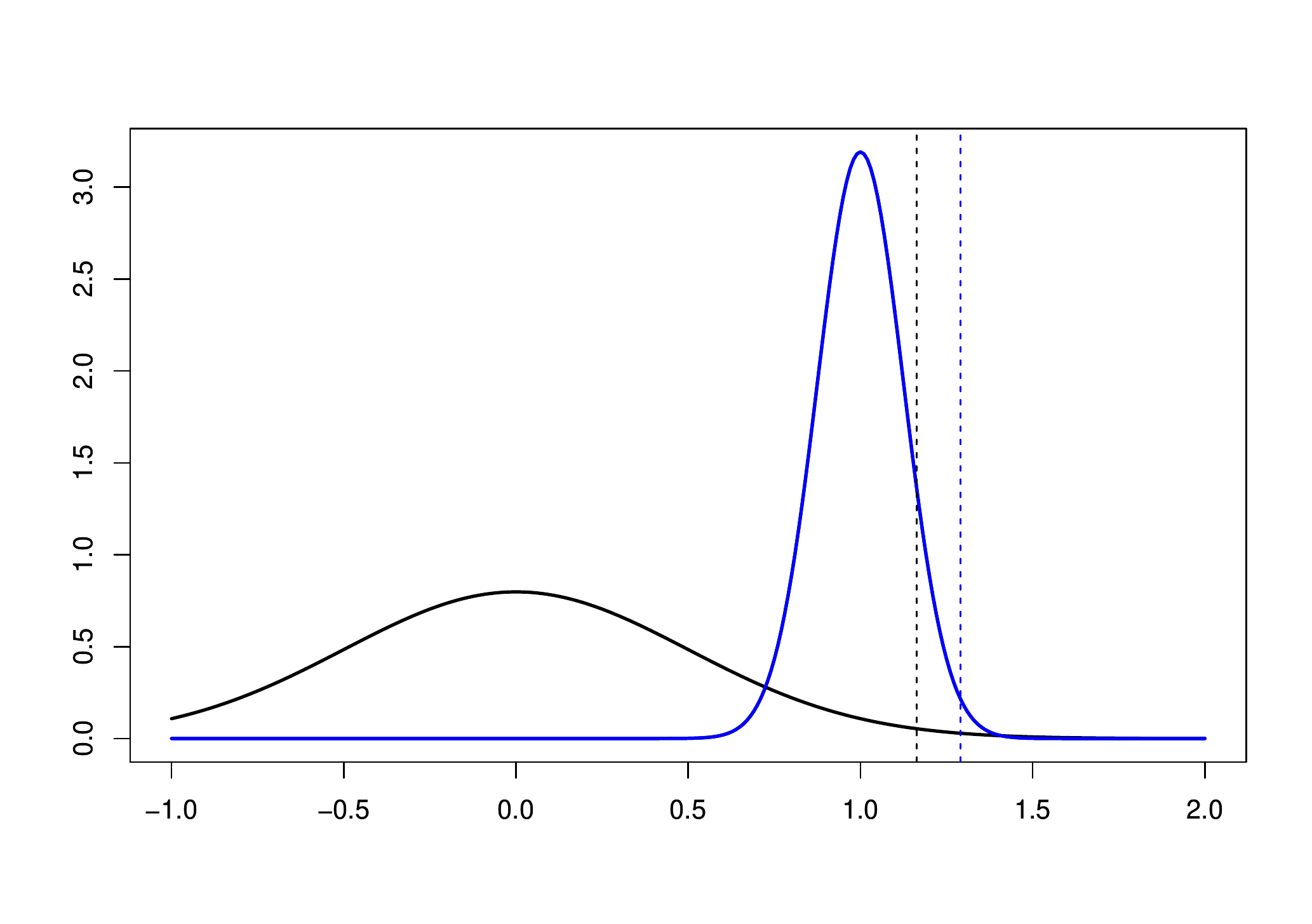}
}
\subfigure{
\includegraphics[width=0.48\textwidth]{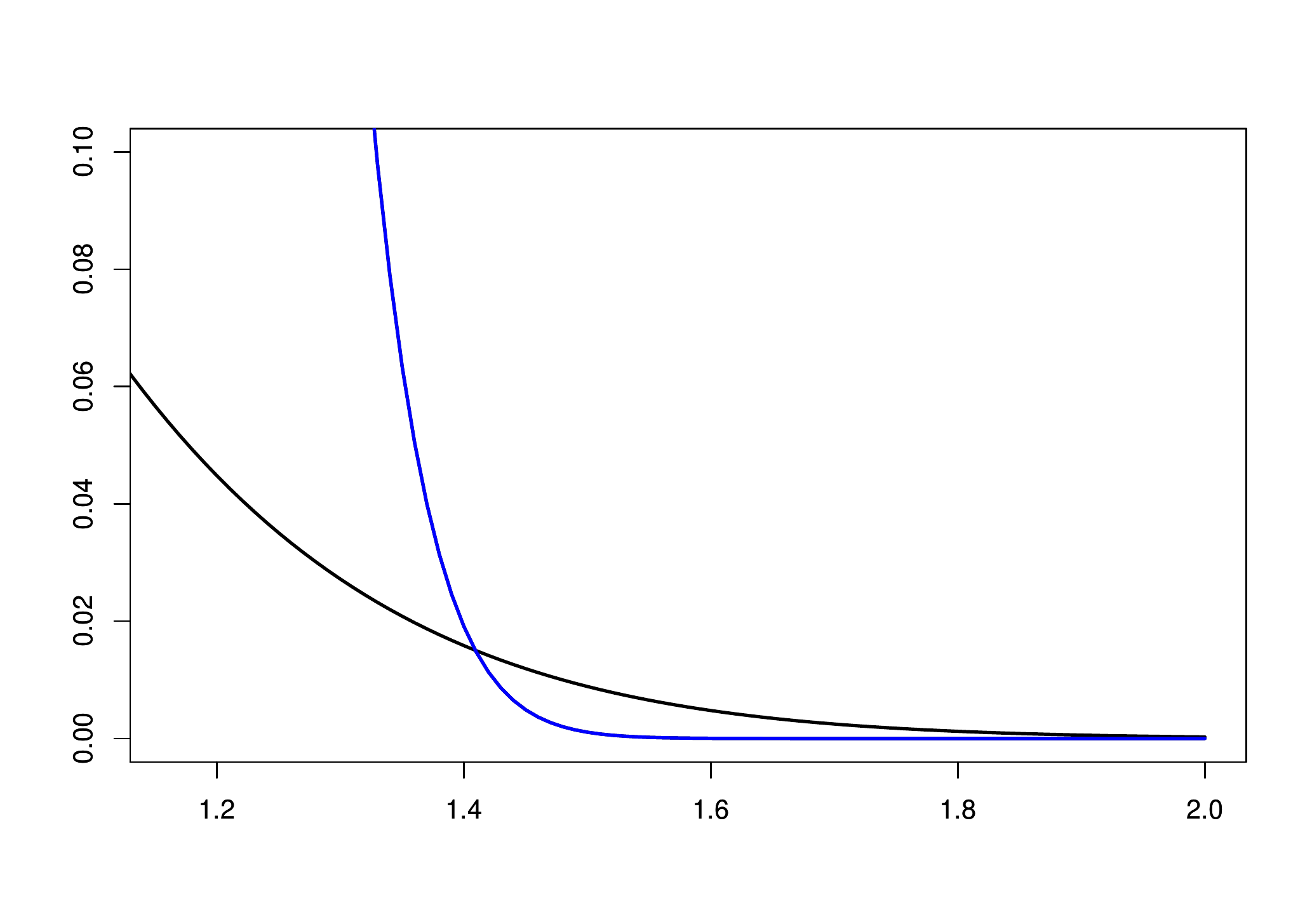}
}
\includegraphics[width=0.48\textwidth]{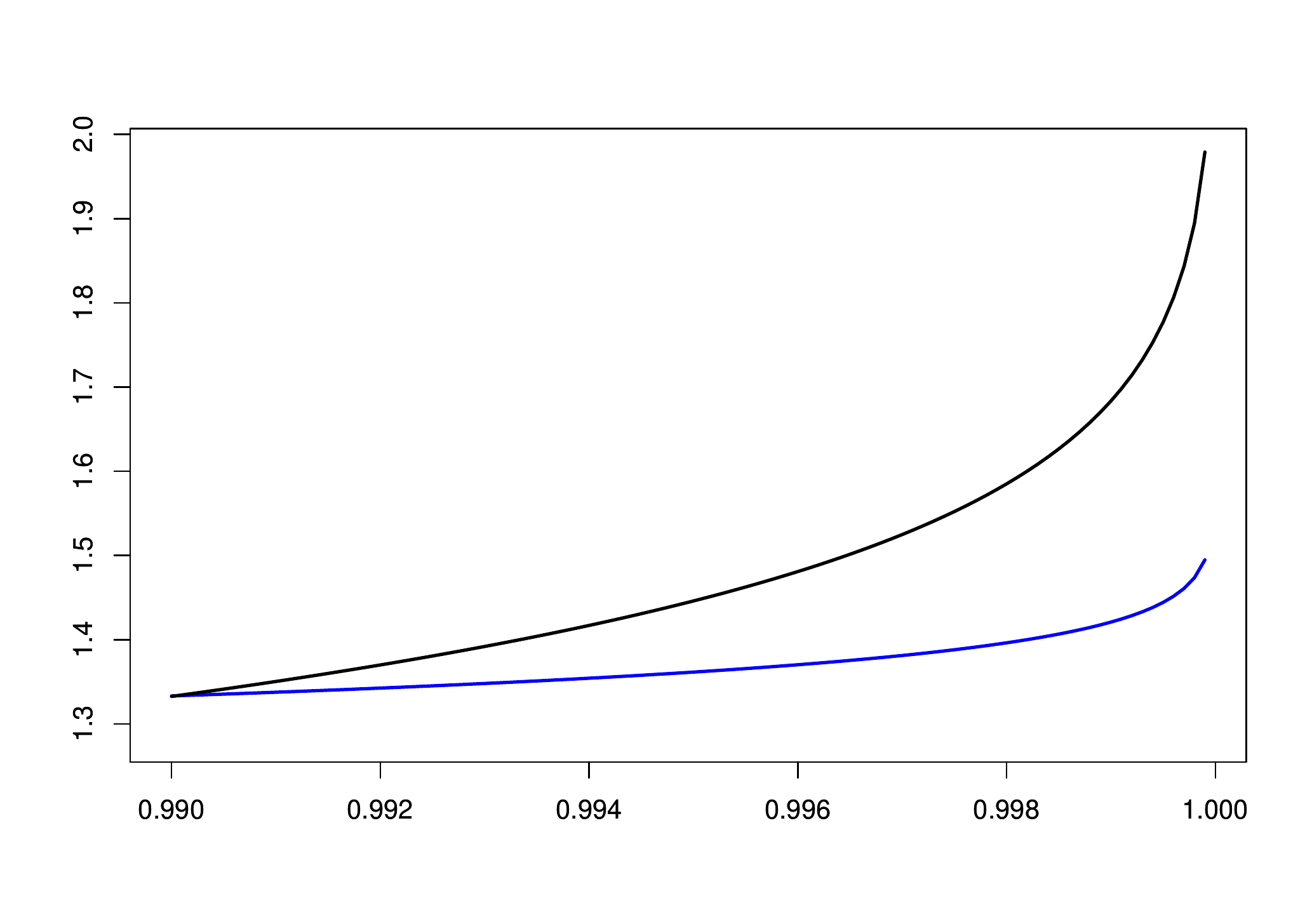}
\caption{Left: Density function of $X_1$ (blue) and $X_2$ (black). The vertical lines correspond to the respective $99\%$ quantiles. Right: Tails of of $X_1$ (blue) and $X_2$ (black) beyond the $99\%$ quantile. Below: $\ES$ profile of $X_1$ (blue) and $X_2$ (black) for $p\ge 0.99$.}
\label{img:distr}
\end{figure}

To illustrate the functioning of the adjusted ES, we consider the following simple example. Consider two normally distributed random variables $X_i\sim N(\mu_i,\sigma_i^2)$, with $\mu_1=1$, $\mu_2=0$, $\sigma_1=0.125$, $\sigma_2=0.5$. For every probability level $p\in(0,1)$ we have
\[
\ES_{p}(X_i)=\mu_i+\sigma_i\frac{\phi(\Phi^{-1}(p))}{1-p},
\]
where $\phi$ and $\Phi$ are, respectively, the density and the distribution function of a standard normal random variable. For $p=99\%$ the $\ES$ of both random variables is approximately equal to $1.33$. In Figure \ref{img:distr} we plot the two distribution functions. Despite having the same ES, the two risks are quite different mainly because of their different variance: The potential losses of $X_1$ tend to accumulate around its mean whereas those of $X_2$ are more disperse and can be significantly higher (compare the tails in Figure \ref{img:distr}). A closer look at the $\ES$ profile of both random variables shows that the ES profile of $X_1$ is more stable than that of $X_2$ (see again Figure \ref{img:distr}). A simple way to distinguish $X_1$ and $X_2$ while, at the same time, focusing on average losses beyond the $99\%$ quantile is to consider the adjusted ES with risk profile
\[
g(p)=
\begin{cases}
0 & \mbox{if} \ p\in[0,0.99],\\
0.1 & \mbox{if} \ p\in(0.99,0.9975],\\
\infty & \mbox{if} \ p\in(0.9975,1].\\
\end{cases}
\]
In this case, we easily obtain
\begin{equation}\label{eq:introES}
\ES^g(X_i) = \max\{\ES_{0.99}(X_i),\ES_{0.9975}(X_i)-0.1\} =
\begin{cases}
\ES_{0.99}(X_1) \approx 1.33 & \mbox{for} \ i=1,\\
\ES_{0.9975}(X_2)-0.1 \approx 1.45 & \mbox{for} \ i=2.
\end{cases}
\end{equation}
The focus of $\ES^g$ is still on the tail beyond the $99\%$ quantile. However, the risk measure $\ES^g$ is able to detect the heavier tail of $X_2$ and penalize it with a higher capital requirement. This is because $\ES^g$ is additionally sensitive to the tail beyond the $99.75\%$ quantile and penalizes any risk whose average loss on this far region of the tail is too large.

\smallskip

We use a similar target risk profile to compare the behavior of the classical ES and the adjusted ES on real data. We collect the S\&P 500 and the NASDAQ Composite indices daily log-returns (using closing prices) from January 01, 1999 to June 30, 2020. Each index has 5406 data points (publicly available from Yahoo Finance). We estimate the risk measures using a standard AR(1)-GARCH(1,1) model with t innovations (see Chapter 4 of \cite{MFE15} for details). In line with Basel III guidelines, to obtain less volatile outcomes we compute average risk measure estimates based on a 60-days moving window. We consider the risk profile function

\begin{figure}[t]
\centering
\subfigure{
\includegraphics[width=0.48\textwidth]{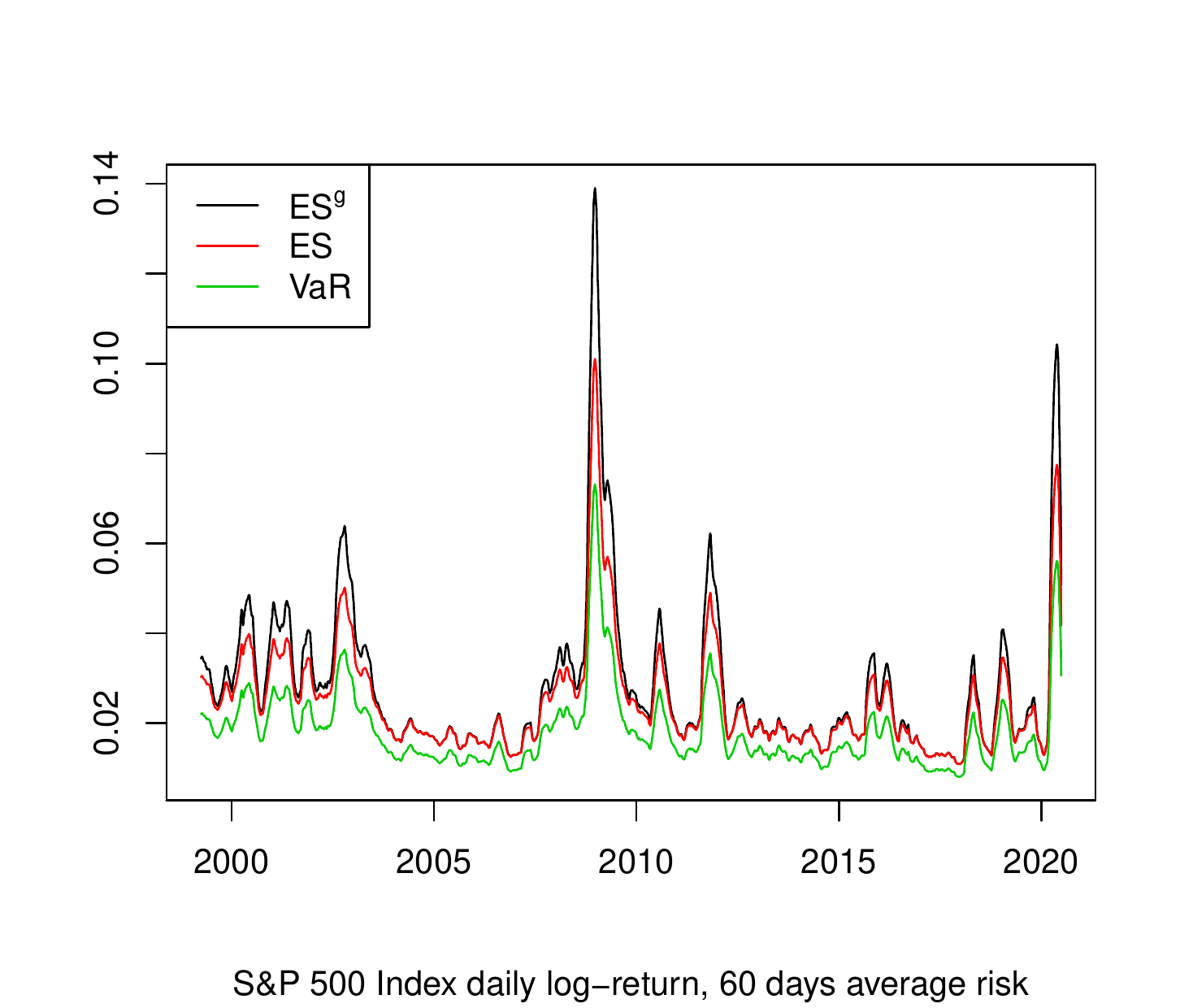}
}
\subfigure{
\includegraphics[width=0.48\textwidth]{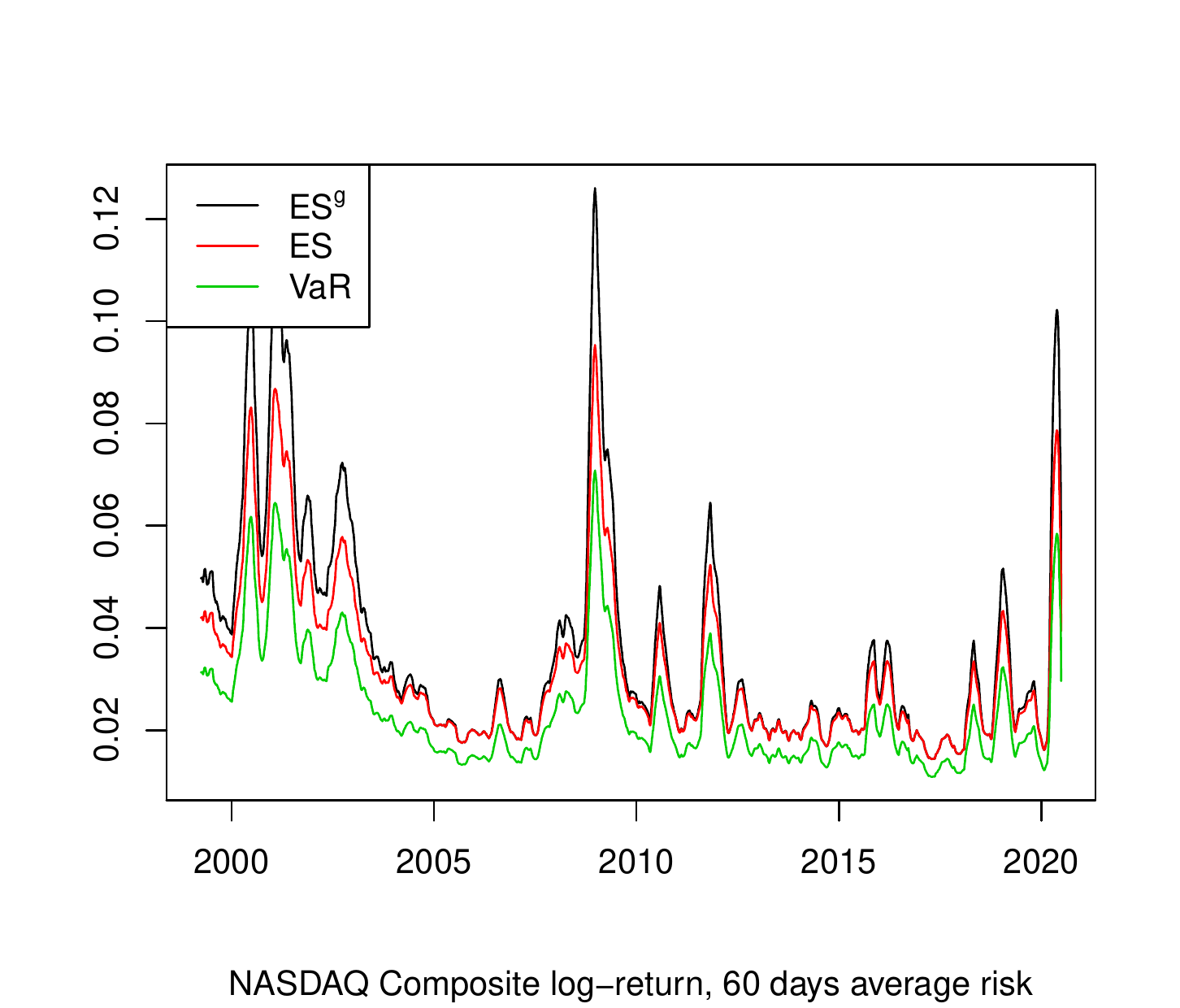}
}
\caption{Estimated $\ES_{0.95}$, $\ES^g$, and $\VaR_{0.95}$ for S\&P 500 and NASDAQ.}
\label{fig:empirical}
\end{figure}

\[
g(p)=
\begin{cases}
0 & \mbox{if} \ p\in[0,0.95],\\
0.01 & \mbox{if} \ p\in(0.95,0.99],\\
\infty & \mbox{if} \ p\in(0.99,1],\\
\end{cases}
\]
which yields
$$
\ES^g(X) = \max \{\ES_{0.95}(X), \ES_{0.99}(X)-0.01\}
$$
similar to \eqref{eq:introES} in a different context. The numbers $0.95$, $0.99$, and $0.01$ that appear in $g$ are chosen for the ease of illustration only. The 20-year estimated values of $\ES$ at level $95\%$ and $\ES^g$, as well as those of $\VaR$ at level $95\%$, are plotted in Figure \ref{fig:empirical}. As we can see from the numerical results on both S\&P 500 and NASDAQ, the estimated values of $\ES^g$ and the reference $\ES$ approximately agree with each other during most of the considered time horizon. However, during periods of significant financial stress, such as the dot-com bubble in 2000, the subprime crisis in 2008, and the COVID-19 crisis in early 2020, $\ES^g$ is visibly larger than the reference $\ES$. This illustrates that $\ES^g$ may capture tail risk in a more appropriate way than $\ES$, especially under financial stress.


\subsubsection*{Choosing the target ES profile}

As illustrated above, a key feature of adjusted ES is the flexibility in the choice of the target risk profile $g$. Indeed, the same random loss can be considered more or less relevant depending on a variety of factors, including the availability of hedging strategies or other risk mitigation tools in the underlying business sector. The choice of $g$ can be therefore tailored to the particular area of application by assigning different weights to different portions of the reference tail. Two examples are especially relevant. On the one hand, we consider a continuous risk profile of the form
\[
g(p) = \ES_p(L),
\]
where $L$ is a benchmark random loss. In this case, we have
\[
\ES^g(X) = \sup_{p\in[0,1]}\{\ES_p(X)-\ES_p(L)\}.
\]
The associated target solvency condition reads:
\[
\ES^g(X)\leq0 \ \iff \ \mbox{$\ES_{p}(X)\leq\ES_p(L)$ for every $p\in[0,1]$}.
\]
This choice of $g$ seems appropriate in the context of portfolio risk management. The distribution of the random loss $L$ may belong to a class of benchmark distributions and the adjusted ES corresponds to the smallest amount of cash that has to be raised and injected in the portfolio to shift its profit and loss distribution until the new distribution dominates the benchmark distribution in the sense of second order stochastic dominance. In other words, the above adjusted ES incorporates second-order stochastic dominance into a monetary risk measure by
\[
\ES^g(X) = \inf\{m\in\R \mid X-m \SSD L\}
\]
where $\SSD$ denotes second-order stochastic domination. Despite the importance of such a concept, we are not aware of earlier attempts to explicitly construct monetary risk measures whose underlying acceptability condition is based on second-order stochastic dominance. This paper offers first results in this direction thereby preparing the theoretical ground for new contributions to the rich literature on the application of stochastic dominance to portfolio risk management, for which we refer to the survey by \cite{LevySSD} and to the more recent contributions by, e.g., \cite{ogryczak2002dual}, \cite{de2005reward}, and \cite{hodder2015improved}.

\smallskip

In the second example, we consider a piecewise constant function of the form
\begin{equation}
\label{eq: piecewise constant g}
g(p)=
\begin{cases}
r_1 & \mbox{if} \ p\in[0,p_1],\\
r_2 & \mbox{if} \ p\in(p_1,p_2],\\
\dots \\
r_n & \mbox{if} \ p\in(p_{n-1},p_n],\\
\infty & \mbox{if} \ p\in(p_n,1],
\end{cases}
\end{equation}
where $0=r_1<\cdots<r_{n-1}<\infty$ and $0<p_1<\cdots<p_n<1$. In this case, we have
\[
\ES^g(X) = \max_{i=1,\dots,n}\{\ES_{p_i}(X)-r_i\}.
\]
The associated target solvency condition reads:
\[
\ES^g(X)\leq0 \ \iff \ \mbox{$\ES_{p_i}(X)\leq r_i$ for every $i=1,\dots,n$}.
\]
The coefficients $r_1,\dots,r_n$ represent benchmark risk thresholds whereas $p_1,\dots,p_n$ correspond to some pre-specified confidence levels. Note that, by design, we always have
\[
\ES^g(X) \geq \ES_{p_1}(X).
\]
This choice of $g$ seems appropriate in the context of solvency regulation. If $p_1$ coincides with a reference regulatory level, e.g.\ $97.5\%$ in Basel III and $99\%$ in the Swiss Solvency Test, the adjusted ES is by design as stringent as the regulatory ES and the additional thresholds $r_2,\dots,r_n$ impose extra limitations to the amount of risk that a firm is allowed to take. In particular, different bounds can be imposed for the, e.g., one in a hundred times event, one in a thousand times event, and the one in a hundred thousand times event. These bounds may correspond to suitable fractions of available capital so that, in case of such adverse events, one can directly quantify the necessary cost for covering the underlying losses. In this way, the actual risk bounds would be firm specific but the rule to determine them would be the same for every company. This is reminiscent of the proposal about Loss VaR in \cite{BBM18}, with ES replacing VaR. It is worth pointing out that imposing additional constraints for higher risks may lead to lower the base regulatory requirement by taking $p_1$ strictly smaller than the reference regulatory level. By doing so, regulators may avoid penalizing firms that are particularly careful about their tail behavior.

\smallskip

A piecewise constant risk profile may be adopted also in other applications. We provide a simple illustration in the context of cyber risk. Differently from other operational risks, cyber risk has a strong geographical component. The empirical study \cite{cyber_insurability}, which takes into account 22,075 incidents reported between March 1971 and September 2009, reveals that ``Northern America has some of the lowest mean cyber risk and non-cyber risk losses, whereas Europe and Asia have much higher average losses despite Northern American companies experience more than twice as many (51.9 per cent) cyber risk incidents than European ﬁrms (23.2 per cent) and even more than twice as many as firms located on other continents''. A possible reason is that North American companies may be better equipped to protect themselves against such events. Cyber risk cannot be properly managed by a simple frequency-severity analysis. In the qualitative analysis of \cite{cyberbook}, many additional factors are identified including ease of discovery, ease of exploit, awareness and intrusion detection. The answers may very well depend on the specific sector if not on the specific firms under consideration. The choice of different reference risk profiles $g$ across companies might be a way to apply the theory of risk measures in the spirit of \cite{ADEH99} to the rather complex analysis of this type of risk. For example, it would be possible to set
\[
g(p)=
\begin{cases}
\ES_{0.99}(Z_1) & \mbox{if} \ p\in[0,0.99],\\
\ES_{0.999}(Z_2) & \mbox{if} \ p\in(0.99,0.999],\\
\ES_{0.9999}(Z_3) & \mbox{if} \ p\in(0.999,0.9999],\\
\infty & \mbox{otherwise},
\end{cases}
\]
where $Z_1,Z_2,Z_3$ are suitable benchmark random losses. The resulting adjusted ES is
\[
\ES^g(X)=\max\{\ES_{0.99}(X)-\ES_{0.99}(Z_1),\ES_{0.999}(X)-\ES_{0.999}(Z_2),\ES_{0.999}(X)-\ES_{0.999}(Z_3)\}.
\]
The associated target solvency condition is given by
\[
\ES^g(X)\leq0 \ \iff \
\begin{cases}
\ES_{0.99}(X)\leq\ES_{0.99}(Z_1),\\
\ES_{0.999}(X)\leq\ES_{0.999}(Z_2),\\
\ES_{0.9999}(X)\leq\ES_{0.9999}(Z_3).
\end{cases}
\]
The choice of $g$ should be motivated by specific cyber risk events (see \cite{cyberbook} for a categorization of likelihood/severity for different cyber attacks): The one in a hundred times event could be the malfunctioning of the server, the one in a thousand times event the stealing of the profile data of the clients, the one in a hundred thousand times event the stealing of the credit cards details of the customers. Note that it is possible to choose a single benchmark random loss or a different benchmark random loss for each considered incident. This choice could also be company specific so as to reflect the company's ability to react to the different types of cyber attacks. This is in line with \cite{cyber_insurability}, which says that ``Regarding size (of the average loss per event), we observe a U-shaped relation, that is, smaller and larger firms have higher costs than medium-sized. Possibly, smaller firms are less aware of and less able to deal with cyber risk, while large firms may suffer from complexity''.

\smallskip

While in principle a different risk category may call for a different choice of the acceptable ES profile $g$, it is sometimes important in practice to ensure a certain degree of comparability across risk assessments.\footnote{We thank an anonymous referee for stressing this important point.} Suppose for example that a bank wants to compare the exposure to different risks $X_1,\ldots,X_k$ arising from different business lines. In principle, each business unit may use a specific ES profile $g_j$. However, if the bank requires that $g_1=\cdots=g_k=0$ on $[0,p)$ for a common $p\in(0,1)$, we can write
\[
\ES^{g_j}(X_j) = \ES_p(X_j)+\underbrace{\ES^{g_j}(X_j)-\ES_p(X_j)}_{\geq0}.
\]
For each $X_j$, the first component in the decomposition is an $\ES$ with common confidence level $p$, which can be used for comparison. The exceedance term $\ES^{g_j}(X_j)-\ES_p(X_j)$ represents the extra amount of capital that is needed to cover the specific risk type. The above decomposition takes a more explicit form if each $g_j$ is a piecewise constant function as in \eqref{eq: piecewise constant g} with customized parameters $r^j_i$'s and $p^j_i$'s. If we take $p^1_1=\cdots=p^k_1=p$, then we obtain
\[
\ES^{g_j}(X_j) = \ES_p(X_j)+\max\bigg\{\max_{i=2,\dots,n}\{\ES_{p^j_i}(X_j)-\ES_p(X_j)-r^j_i\},0\bigg\}.
\]
In this case, the risk-specific component is activated only when $\ES_{p^j_i}(X_j)$ is larger than the penalized benchmark ES term $\ES_p(X_j)+r^j_i$ for some index $i$. The parameters $r^j_i$'s and $p^j_i$'s can be tailored, e.g., to the size of the underlying tails. This example can be easily adapted to include a different number of thresholds for each risk class, i.e., $n$ may also depend on $j$. The choice may depend, e.g., on the size of the available observation sample and the frequency of tail observations.


\section{Basic properties of adjusted ES}
\label{sect: basic properties}

In this section we discuss a selection of relevant properties of adjusted ES. It is a direct consequence of our definition that every adjusted ES is a monetary risk measure in the sense of \cite{FS16}, i.e., is monotone and cash additive. The other properties listed below are automatically inherited from the corresponding properties of ES. For every risk profile $g\in\cG$ the risk measure $\rg$ satisfies the following properties:
\begin{itemize}
\item {\em monotonicity}: $\rg(X)\leq\rg(Y)$ for all $X,Y\in L^1$ such that $X\leq Y$.
\item {\em cash additivity}: $\rg(X+m)=\rg(X)+m$ for all $X\in L^1$ and $m\in\R$.
\item {\em convexity}: $\rg(\lambda X+(1-\lambda)Y) \leq\lambda\rg(X)+(1-\lambda)\rg(Y)$ for all $X,Y\in L^1$ and $\lambda\in[0,1]$.
\item {\em law invariance}: $\rg(X)=\rg(Y)$ for all $X,Y\in L^1$ such that $X\sim Y$.
\item {\em normalization}: $\rg(0)=0$ if and only if $g(0)=0$.
\end{itemize}

\smallskip

Being convex and law invariant, every adjusted ES is automatically consistent with second-order stochastic dominance; see, e.g., \cite{bellini2020law}. In fact, the link between adjusted ES's and stochastic dominance is far stronger. Recall that for any random variables $X,Y\in L^1$ we say that {\em $X$ dominates $Y$ with respect to second-order stochastic dominance}, written $X\SSD Y$, whenever the following condition holds:
\[
\mbox{$\E[u(-X)]\ge\E[u(-Y)]$ for every increasing and concave function $u:\R\to\R$}.
\]
In the language of utility theory, this means that $X$ is preferred to $Y$ by every risk-averse agent (recall that positive values of a random variable represent losses). We refer to \cite{bookSSD} for a classical reference on stochastic dominance. By convexity and law invariance, for every risk profile $g\in\cG$ the risk measure $\ES^g$ satisfies:
\begin{itemize}
    \item {\em consistency with $\SSD$}: $\rg(X)\leq\rg(Y)$ for all $X,Y\in L^1$ such that $X\SSD Y$.
\end{itemize}
This implies that $\rg$ belongs to the class of \emph{consistent risk measures} as defined in \cite{MW16}. In fact, it is shown in that paper that {\em any} consistent risk measure can be expressed as an infimum of a collection of risk measures which, using the terminology of this paper, are precisely of adjusted ES type.

\begin{proposition}[Theorem 3.1 in \cite{MW16}]
Let $\rho:L^1\to(-\infty,\infty]$ be cash additive and consistent with $\SSD$. Then, there exists $\cH\subset\cG$ such that for every $X\in L^1$ we have
\[
\rho(X) = \inf_{g\in\cH}\ES^g(X).
\]
\end{proposition}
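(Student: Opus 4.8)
The plan is to realize $\rho$ as the lower envelope of the \emph{largest} family of adjusted ES's that dominate it, and then to prove that this envelope is tight by exhibiting, at each position, an adjusted ES whose risk profile is read off directly from the ES profile of that position. Concretely, I would set
\[
\cH := \{g\in\cG \mid \ES^g(Y)\ge\rho(Y)\text{ for all }Y\in L^1\}.
\]
With this choice the inequality $\inf_{g\in\cH}\ES^g(X)\ge\rho(X)$ holds for every $X\in L^1$ by construction, so the entire content of the statement reduces to the reverse inequality, which I would establish pointwise.

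Fix $X_0\in L^1$ with $\rho(X_0)\in\R$; the case $\rho(X_0)=+\infty$ is immediate, since then every $g\in\cH$ already forces $\ES^g(X_0)=\infty$. Define the candidate profile $g_0(p) := \ES_p(X_0)-\rho(X_0)$ for $p\in[0,1]$. Since $p\mapsto\ES_p(X_0)$ is increasing and finite on $[0,1)$ (with a possible value $+\infty$ only at $p=1$) and $\rho(X_0)$ is a real constant, we have $g_0\in\cG$. Moreover, by Proposition \ref{prop: cash additive representation},
\[
\ES^{g_0}(X_0)=\sup_{p\in[0,1]}\{\ES_p(X_0)-g_0(p)\}=\rho(X_0),
\]
so once $g_0\in\cH$ is known I immediately obtain $\inf_{g\in\cH}\ES^g(X_0)\le\ES^{g_0}(X_0)=\rho(X_0)$, closing the argument.

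The crux is therefore to verify $g_0\in\cH$, that is, $\ES^{g_0}(Y)\ge\rho(Y)$ for every $Y\in L^1$. Here I would set $m:=\ES^{g_0}(Y)-\rho(X_0)=\sup_{p}\{\ES_p(Y)-\ES_p(X_0)\}$; if $m=+\infty$ there is nothing to prove, so assume $m\in\R$ (note $m>-\infty$, as it dominates the finite value at $p=0$). By definition of $m$ and cash additivity of $\ES_p$, we get $\ES_p(Y-m)=\ES_p(Y)-m\le\ES_p(X_0)$ for every $p\in[0,1)$. At this point I would invoke the classical characterization of second-order stochastic dominance through Expected Shortfall (see, e.g., \cite{bookSSD}), namely that for $U,V\in L^1$ one has $U\SSD V$ if and only if $\ES_p(U)\le\ES_p(V)$ for all $p\in[0,1)$; this yields $Y-m\SSD X_0$. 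Consistency of $\rho$ with $\SSD$ then gives $\rho(Y-m)\le\rho(X_0)$, and cash additivity of $\rho$ turns this into $\rho(Y)\le\rho(X_0)+m=\ES^{g_0}(Y)$, as required.

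The step I expect to be the genuine obstacle, and the conceptual heart of the result, is recognizing that the ``slack'' $m$ in the supremum representation of $\ES^{g_0}(Y)$ is exactly the translation that makes $Y-m$ dominate $X_0$ in the sense of second-order stochastic dominance; once this is seen, everything else is bookkeeping resting on cash additivity and the ES-characterization of $\SSD$. Some care is also needed at the boundary level $p=1$ and for positions where $\rho$ or $\ES_p$ equals $+\infty$, but these are absorbed by the standing convention $\infty-\infty=-\infty$ and by the trivial cases isolated above.
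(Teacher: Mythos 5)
Your proof is correct: the envelope construction with $g_0(p)=\ES_p(X_0)-\rho(X_0)$, shown to be tight via the ES characterization of $\SSD$ combined with cash additivity of $\rho$, goes through, and you handle the delicate points properly (the $p=1$ term under the convention $\infty-\infty=-\infty$, finiteness of $m$ from below via $p=0$, and the trivial case $\rho(X_0)=\infty$). Note that the paper itself offers no proof — the result is imported as Theorem 3.1 of \cite{MW16} — and your argument is essentially the one given there, the only cosmetic difference being that you take the maximal family $\cH$ of all dominating profiles rather than the family $\{p\mapsto\ES_p(Y)-\rho(Y)\mid \rho(Y)\in\R\}$ indexed by positions.
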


\smallskip

The above proposition shows that adjusted ES can be seen as the {\em building block for risk measures that are consistent with second-order stochastic dominance}. This class is large and includes, e.g., all law-invariant convex risk measures.

\smallskip

It is well known that, in addition to convexity, ES satisfies positive homogeneity. This qualifies it as a coherent risk measure in the sense of \cite{ADEH99}. In the next proposition we show that $\rg$ satisfies positive homogeneity only in the case where it coincides with some ES. In other words, with the exception of ES, the class of adjusted ES's consists of monetary risk measures that are convex but not coherent. 

\begin{proposition}\label{prop:coherent}
For every risk profile $g\in\cG$ the following statements are equivalent:
\begin{enumerate}[(a)]
  \item $\rg$ is positively homogeneous, i.e., $\rg(\lambda X)=\lambda\rg(X)$ for all $X\in L^1$ and $\lambda\in(0,\infty)$.
  \item $g(0)=0$ and $g(p)\in(0,\infty)$ for at most one $p\in(0,1]$.
  \item $\rg=\ES_p$ where $p=\sup\{q\in[0,1] \mid g(q)=0\}$.
\end{enumerate}
\end{proposition}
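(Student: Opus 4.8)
The plan is to prove the cycle $(c)\Rightarrow(a)\Rightarrow(b)\Rightarrow(c)$, relying throughout on the representation $\rg(X)=\sup_{p\in[0,1]}\{\ES_p(X)-g(p)\}$ from Proposition~\ref{prop: cash additive representation}, on the positive homogeneity of each $\ES_p$, and on the standard facts that $p\mapsto\ES_p(X)$ is nondecreasing and continuous on $[0,1)$ with $\lim_{p\to1}\ES_p(X)=\esssup X=\ES_1(X)$. The implication $(c)\Rightarrow(a)$ is immediate, since $\ES_{p}$ is positively homogeneous.

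For $(b)\Rightarrow(c)$ I would first note that (b) forces $g$ into a canonical shape. Since $g(0)=0$ and $g$ is nondecreasing, $g\ge0$, its zero-set $\{p:g(p)=0\}$ is an initial interval with supremum $p^*=\sup\{q:g(q)=0\}$, and $g>0$ beyond $p^*$. If a (unique) finite positive value occurs at some $p_0$, then $g$ must vanish on $[0,p_0)$ and be $\infty$ on $(p_0,1]$, forcing $p_0=p^*$. In the supremum defining $\rg(X)$ the region $\{g=\infty\}$ contributes $-\infty$, the zero-set contributes $\sup_{p:g(p)=0}\ES_p(X)=\ES_{p^*}(X)$ by monotonicity and continuity, and the finite positive point (if present) contributes $\ES_{p^*}(X)-g(p^*)<\ES_{p^*}(X)$ and is thus dominated. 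Hence $\rg=\ES_{p^*}$.

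The crux is $(a)\Rightarrow(b)$. First I would obtain $g(0)=0$ by testing $X=0$: since $\rg(0)=-\inf_p g(p)=-g(0)$ and positive homogeneity gives $\rg(0)=\lambda\rg(0)$ for all $\lambda>0$, necessarily $\rg(0)=0$, so $g(0)=0$ and $g\ge0$. The main idea is then to study, for fixed $X$, the map $\lambda\mapsto\rg(\lambda X)/\lambda=\sup_p\{\ES_p(X)-g(p)/\lambda\}$: because $g\ge0$, each summand is nondecreasing in $\lambda$, so the map is nondecreasing, and positive homogeneity forces it to be constant in $\lambda$. Interchanging the two suprema (legitimate since both are monotone) I would compute its limit as $\lambda\to\infty$ to equal $\sup_{p:g(p)<\infty}\ES_p(X)=\ES_{\bar p}(X)$, where $\bar p:=\sup\{p:g(p)<\infty\}$; constancy then yields $\rg(X)=\ES_{\bar p}(X)$ for every $X$.

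Finally I would rule out any point $p'<\bar p$ with $g(p')\in(0,\infty)$. Were there such a $p'$, monotonicity would give $g(p)\ge g(p')=:\eta>0$ for all $p\in[p',\bar p]$; taking $X\sim U(0,1)$ (available on the atomless space), so that $\ES_{p'}(X)<\ES_{\bar p}(X)$, and splitting the supremum over $[0,p')$, $[p',\bar p]$, and $(\bar p,1]$, I would bound $\rg(X)\le\max\{\ES_{p'}(X),\,\ES_{\bar p}(X)-\eta\}<\ES_{\bar p}(X)$, contradicting $\rg=\ES_{\bar p}$. Thus $g\in\{0,\infty\}$ off $\bar p$, and combined with $g=0$ on $[0,\bar p)$ this means $g$ takes a value in $(0,\infty)$ at most at the single point $\bar p$, which is precisely (b). The main obstacle is this last direction: making the supremum-interchange and the $\lambda\to\infty$ limit rigorous, and verifying the counterexample across the boundary cases, notably $\bar p=1$ (where $\ES_{\bar p}=\esssup$) and whether $\bar p$ itself lies in the finiteness domain of $g$.
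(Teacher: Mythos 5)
Your proof is correct, and the crux implication (a)$\Rightarrow$(b) takes a genuinely different route from the paper's. The paper, after deriving $g(0)=0$ exactly as you do, argues by contradiction from \emph{two} finite positive values $0<g(p_1)\le g(p_2)<\infty$: it constructs an explicit two-point random variable $X$ with values $a<0<b$, where $a$ is defined through an infimum tailored so that $\ES_p(X)\le g(p)$ for every $p$ (hence $\rg(X)\le 0$), while $\rg(\lambda X)\ge \lambda b-g(q)>0$ for large $\lambda$, contradicting homogeneity at a single scale comparison. You instead exploit homogeneity globally: since $g\ge 0$, the map $\lambda\mapsto \rg(\lambda X)/\lambda=\sup_p\{\ES_p(X)-g(p)/\lambda\}$ is nondecreasing, hence constant under (a), and the supremum--supremum interchange (legitimate, since each $\lambda\mapsto \ES_p(X)-g(p)/\lambda$ is nondecreasing, with the convention $\infty-\infty=-\infty$ absorbing the points where $g(p)=\infty$) identifies the $\lambda\to\infty$ limit as $\sup_{\{p\,:\,g(p)<\infty\}}\ES_p(X)=\ES_{\bar p}(X)$; whether or not $g$ is finite at $\bar p$ itself is immaterial because $\sup_{p<\bar p}\ES_p(X)=\ES_{\bar p}(X)$ by continuity of the ES profile, valid also at $\bar p=1$ in the extended sense. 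This gives $\rg=\ES_{\bar p}$ outright, and then the single test variable $X\sim U(0,1)$, whose ES profile $(1+p)/2$ is strictly increasing, kills any $p'<\bar p$ with $g(p')\in(0,\infty)$ via your three-region bound. What each approach buys: yours avoids the paper's delicate distributional construction entirely (one simple test variable suffices), and it effectively proves the representation in (c) directly from (a), so that once (b) is in hand the identification $\bar p=p^*$ makes the remaining step immediate; the paper's construction is more hands-on and self-contained, needing no extended-real limit or interchange bookkeeping, and localizes the failure of homogeneity to one explicit position. Your (b)$\Rightarrow$(c) and (c)$\Rightarrow$(a) steps coincide with the paper's, including the same use of continuity of $p\mapsto\ES_p(X)$ to absorb a possible finite positive value of $g$ at the endpoint; the boundary cases you flag at the end all check out as you anticipate.
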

\begin{proof}
``(a)$\Rightarrow$(b)": Since $\rg$ is positively homogeneous we have
\[
\lambda g(0) = -\lambda\rg(0) = -\rg(\lambda0) = -\rg(0) = g(0)
\]
for every $\lambda\in(0,\infty)$. As $g(0)<\infty$ by our assumptions on the class $\cG$, we must have $g(0)=0$. Now, assume by way of contradiction that $0<g(p_1)\leq g(p_2)<\infty$ for some $0<p_1<p_2\leq1$. Take now $q\in(p_1,p_2)$ and $b\in(0,g(p_1))$ and set
\[
a = \min\left\{-\frac{(1-q)b}{p-p_1},\inf_{p\in[0,p_1)}\frac{(1-p)g(p)-b(1-q)}{q-p}\right\}.
\]
Note that $a<0$. Since the underlying probability space is assumed to be atomless, we can always find a random variable $X\in L^1$ satisfying
\[
F_X(x)=
\begin{cases}
0 & \mbox{if $x\in(-\infty,a)$},\\
q & \mbox{if $x\in[a,b)$},\\
1 & \mbox{if $x\in[b,\infty)$}.
\end{cases}
\]
Note that, for every $p\in[0,p_1)$, the definition of $a$ implies
\[
\frac{(1-p)g(p)-b(1-q)}{q-p} \geq a.
\]
Moreover, for every $p\in[p_1,q)$, the choice of $b$ implies
\[
\frac{(1-p)g(p)-b(1-q)}{q-p} \geq \frac{(1-p)g(p_1)-b(1-q)}{q-p} \geq \frac{(1-p)b-b(1-q)}{q-p} = b \geq a.
\]
As a result, for every $p\in[0,q)$ we obtain
\[
\ES_p(X) = \frac{a(q-p)+b(1-q)}{1-p} \leq g(p).
\]
Similarly, for every $p\in[q,1]$ we easily see that
\[
\ES_p(X) = b < g(p_1) \leq g(q) \leq g(p).
\]
This yields $\rg(X)\leq0$. However, taking $\lambda>0$ large enough delivers
\[
\rg(\lambda X) = \sup_{p\in[0,1]}\{\lambda\ES_p(X)-g(p)\} \geq \lambda\ES_q(X)-g(q) = \lambda b-g(q) > 0
\]
in contrast to positive homogeneity. As a consequence, we must have $p_1=p_2$ and thus (b) holds.

\smallskip

``(b)$\Rightarrow$(c)":  Set $q=\sup\{p\in[0,1] \mid g(p)=0\}$. Note that $q\in[0,1]$. Clearly, we have $g(p)=0$ for every $p\in[0,q)$ and $g(p)=\infty$ for every $p\in(q,1]$ by assumption. Take an arbitrary $X\in L^1$. From the definition of $\ES$ and the continuity of the integral, it follows that $p\mapsto\ES_p(X)$ is continuous. As a result, we obtain
$$
\rg(X) = \sup_{p\in[0,q]}\{\ES_p(X)-g(p)\} = \sup_{p\in[0,q]}\ES_p(X) = \ES_q(X).
$$

\smallskip

``(c)$\Rightarrow$(a)": The implication is clear.
\end{proof}

\smallskip

An adjusted ES is convex but, unless it coincides with a standard ES, not subadditive. It is therefore natural to focus on infimal convolutions of adjusted ES's, which are important tools in the study of optimal risk sharing and capital allocation problems involving non-subadditive risk measures; see, e.g., \cite{BarrieuElKaroui2005}, \cite{BurgertRueschendorf2008}, \cite{FilipovicSvindland2008} for results in the convex world and \cite{EmbrechtsLiuWang2018} for results beyond convexity.

\begin{definition}\label{def:infconv}
Let $n\in\N$ and consider $\rho_1,\dots,\rho_n:L^1\to(-\infty,\infty]$. For every $X\in L^1$ we set
\[
\mathcal{S}^n(X) := \left\{(X_1,\ldots,X_n)\in L^1\times\cdots\times L^1 \,\bigg\vert\, \sum_{i=1}^n X_i=X\right\}.
\]
The map $\icn{i}{n}\rho_i:L^1\to[-\infty,\infty]$ defined by
\[
\icn{i}{n}\rho_i(X):=\inf\left\{\sum_{i=1}^n\rho_i(X_i) \,\bigg\vert\, (X_1,\ldots,X_n)\in\mathcal{S}^n(X)\right\},
\]
is called the \emph{inf-convolution} of $\{\rho_1,\dots,\rho_n\}$. For $n=2$ we simply write $\rho_1\ic\rho_2$.
\end{definition}

\smallskip

\begin{remark}\label{rm:ic sum}
Recall that, if $\rho_1,\dots,\rho_n$ are monetary risk measures, then for every $X\in L^1$
\[
\icn{i}{n}\rho_i(X)=\inf\{m\in\R \mid X-m\in\cA_1+\cdots+\cA_n\}
\]
where $\cA_i=\{X\in L^1 \mid \rho_i(X)\leq0\}$ is the acceptance sets induced by $\rho_i$ for $i=1,\dots,n$. This shows that the infimal convolution of monetary risk measures is also a monetary risk measure.
\end{remark}

\smallskip

We establish a general inequality for inf-convolutions. More precisely, we show that any inf-convolution of adjusted ES's can be controlled from below by a suitable adjusted ES. This allows us to derive a formula for the inf-convolution of an adjusted ES with itself.

\begin{proposition}\label{prop:selfconv}
Let $n\in\N$ and consider the risk profiles $g,g_1,\dots,g_n\in\cG$. For every $X\in L^1$
\begin{equation}
\label{eq: inf conv 1}
\icn{i}{n}\ES^{g_i}(X) \geq \ES^{\sum_{i=1}^ng_i}(X).    
\end{equation}
In particular, for every $X\in L^1$
\begin{equation}
\label{eq: inf conv 2}
\icn{i}{n}\rg(X) = \ES^{ng}(X).
\end{equation}
\end{proposition}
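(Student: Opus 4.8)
The plan is to derive the inequality \eqref{eq: inf conv 1} directly from the cash-additive representation of Proposition~\ref{prop: cash additive representation} together with the subadditivity of standard ES, and then to obtain the identity \eqref{eq: inf conv 2} by combining \eqref{eq: inf conv 1} with one explicit decomposition.

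For \eqref{eq: inf conv 1}, I would fix an arbitrary allocation $(X_1,\dots,X_n)\in\mathcal{S}^n(X)$ and a level $p\in[0,1]$. From Proposition~\ref{prop: cash additive representation} we have $\ES^{g_i}(X_i)\geq\ES_p(X_i)-g_i(p)$ for each $i$; summing over $i$ gives $\sum_{i=1}^n\ES^{g_i}(X_i)\geq\sum_{i=1}^n\ES_p(X_i)-\sum_{i=1}^n g_i(p)$. Since $\ES_p$ is subadditive (it is coherent), $\sum_{i=1}^n\ES_p(X_i)\geq\ES_p(X)$, so the right-hand side is at least $\ES_p(X)-\bigl(\sum_{i=1}^n g_i\bigr)(p)$. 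Taking the supremum over $p\in[0,1]$ and invoking Proposition~\ref{prop: cash additive representation} once more yields $\sum_{i=1}^n\ES^{g_i}(X_i)\geq\ES^{\sum_{i=1}^n g_i}(X)$, and the infimum over all allocations gives \eqref{eq: inf conv 1}. The only bookkeeping point is to restrict attention to those $p$ with $\bigl(\sum_i g_i\bigr)(p)<\infty$, since otherwise the bound is trivial under the convention $\infty-\infty=-\infty$.

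For \eqref{eq: inf conv 2}, the inequality ``$\geq$'' is just \eqref{eq: inf conv 1} specialized to $g_1=\cdots=g_n=g$, so that $\sum_{i=1}^n g_i=ng$. For the reverse inequality ``$\leq$'', I would test the inf-convolution against the symmetric allocation $X_i=X/n$. Here I use that the \emph{standard} ES is positively homogeneous: $\ES_p(X/n)=\tfrac1n\ES_p(X)$. Hence $\ES^g(X/n)=\sup_{p}\{\tfrac1n\ES_p(X)-g(p)\}$, and multiplying by $n$ and pulling the factor inside the supremum gives $\sum_{i=1}^n\ES^g(X/n)=n\sup_{p}\{\tfrac1n\ES_p(X)-g(p)\}=\sup_{p}\{\ES_p(X)-ng(p)\}=\ES^{ng}(X)$. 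This shows $\icn{i}{n}\ES^g(X)\leq\ES^{ng}(X)$, and combining the two inequalities closes the proof.

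I expect the substantive step to be the upper bound in \eqref{eq: inf conv 2}: one must exhibit an allocation that actually attains the lower bound coming from \eqref{eq: inf conv 1}. The key realization is that, although $\ES^g$ itself fails positive homogeneity (Proposition~\ref{prop:coherent}), the equal split $X_i=X/n$ nevertheless works, precisely because positive homogeneity of the underlying $\ES_p$ turns the $1/n$ scaling of the argument into exactly the $n$-fold scaling of the profile needed to match $ng$. By contrast, \eqref{eq: inf conv 1} is essentially a formal consequence of subadditivity of $\ES_p$ and the representation in Proposition~\ref{prop: cash additive representation}, and I anticipate no real difficulty there beyond the $\infty$-convention bookkeeping noted above.
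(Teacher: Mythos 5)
Your proposal is correct and takes essentially the same approach as the paper: \eqref{eq: inf conv 1} follows from the representation in Proposition~\ref{prop: cash additive representation} plus subadditivity of $\ES_p$, and the upper bound in \eqref{eq: inf conv 2} is obtained exactly as in the paper by testing the equal split $X_i=X/n$ and using positive homogeneity of the underlying $\ES_p$. The only differences are cosmetic: the paper reduces \eqref{eq: inf conv 1} to the case $n=2$ while you treat general $n$ directly, and your bookkeeping remark about the convention $\infty-\infty=-\infty$ makes explicit a point the paper leaves implicit.
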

\begin{proof}
To show \eqref{eq: inf conv 1}, it suffices to focus on the case $n=2$. For all $Y\in L^1$ and $p\in[0,1]$ we have
\[
\ES^{g_1}(Y)+\ES^{g_2}(X-Y) \geq \ES_p(Y)-g_1(p)+\ES_p(X-Y)-g_2(p) \geq \ES_p(X)-(g_1+g_2)(p)
\]
by subadditivity of ES. Taking the supremum over $p$ and the infimum over $Y$ delivers the desired inequality. To show \eqref{eq: inf conv 2}, note that the inequality ``$\geq$'' follows directly from \eqref{eq: inf conv 1}. To show the inequality ``$\leq$'', observe that
\[
\rg\bigg(\frac{1}{n}X\bigg) = \frac{1}{n}\sup_{p\in[0,1]}\{\ES_p(X)-ng(p)\} = \frac{1}{n}\ES^{ng}(X).
\]
As a result, we infer that
\[
\icn{i}{n}\rg(X) \leq \sum_{i=1}^n\rg\bigg(\frac{1}{n}X\bigg) = \ES^{ng}(X).
\]
This yields the desired inequality and concludes the proof.
\end{proof}

\smallskip

\begin{remark}
A risk measure that is not subadditive may incentivize the splitting and (internal) reallocation of risk with the sole purpose of reaching a lower level of capital requirements. This is related to the notion of regulatory arbitrage introduced in \cite{Wang_reg}. In line with that paper, we say that a functional $\rho:L^1\to(-\infty,\infty]$ is either {\em free} of regulatory arbitrage or has {\em limited} or {\em infinite} regulatory arbitrage if the quantity (recall our convention $\infty-\infty=-\infty$)
\[
\rho(X)-\inf_{n\in\N}\icn{i}{n}\rho(X)
\]
is null, finite, or infinite for every $X\in L^1$. Clearly, every risk measure that is not subadditive admits regulatory arbitrage. The preceding result on infimal convolutions allows us to show that an adjusted ES exhibits regulatory arbitrage only in a limited form. More precisely, for a risk profile $g\in\cG$ with $g(0)=0$ we have:
\begin{enumerate}[(i)]
    \item $\rg(X)-\inf_{n\in\N}\icn{i}{n}\rg(X)<\infty$ for every $X\in L^1$ with $\rg(X)<\infty$.
    \item $\rg(X)-\inf_{n\in\N}\icn{i}{n}\rg(X)=\infty$ for every $X\in L^1$ with $\rg(X)=\infty$.
\end{enumerate}
In particular, to prove (i), it suffices to note that Proposition~\ref{prop:selfconv} implies for every $X\in L^1$
\[
\inf_{n\in\N}\icn{i}{n}\rg(X) = \inf_{n\in\N}\ES^{ng}(X) \geq \ES_0(X)=\E[X] > -\infty.
\]
\end{remark}

\smallskip

We conclude this section by focusing on dual representations, which are a useful tool in many applications, notably optimization problems; see the general discussion in \cite{Rockafellar1974} and the results on risk measures in \cite{FS16}. In what follows we denote by $\cP$ the set of probability measures on $(\Omega,\cF)$ and use standard notation for Radon-Nikodym derivatives.

\begin{proposition}
\label{prop:simpledual}
Consider a risk profile $g\in\cG$. For every $X\in L^1$ we have
\[
\rg(X) = \sup_{\probq\in\cP^\infty_\probp}\bigg\{\E_\probq[X]-
g\bigg(1-\bigg\|\frac{d\probq}{d\probp}\bigg\|^{-1}_\infty\bigg)\bigg\},
\]
where $\cP^\infty_\probp=\{\probq\in\cP \mid \probq\ll\probp, \ d\probq/d\probp\in L^\infty\}$.
\end{proposition}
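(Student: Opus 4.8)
We need to prove a dual representation for the $g$-adjusted ES. The formula is:
$$\rho_g(X) = \sup_{\mathbb{Q} \in \mathcal{P}^\infty_\mathbb{P}} \left\{ \mathbb{E}_\mathbb{Q}[X] - g\left(1 - \left\|\frac{d\mathbb{Q}}{d\mathbb{P}}\right\|_\infty^{-1}\right)\right\}$$

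**Key ingredient - ES dual representation:** The crucial known fact is the dual representation of ES itself. For $p \in [0,1)$:
$$\text{ES}_p(X) = \sup_{\mathbb{Q}} \{\mathbb{E}_\mathbb{Q}[X] : \mathbb{Q} \ll \mathbb{P}, \ 0 \le \frac{d\mathbb{Q}}{d\mathbb{P}} \le \frac{1}{1-p}\}$$

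This is the classical representation of ES as a coherent risk measure. The constraint $\|d\mathbb{Q}/d\mathbb{P}\|_\infty \le 1/(1-p)$ is equivalent to $1 - \|d\mathbb{Q}/d\mathbb{P}\|_\infty^{-1} \le p$.

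**The strategy:** Start from Proposition 2.3: $\rho_g(X) = \sup_{p \in [0,1]} \{\text{ES}_p(X) - g(p)\}$. Substitute the dual representation of each $\text{ES}_p$, then swap the order of the suprema, and identify the resulting constraint in terms of $\mathbb{Q}$.

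Let me think carefully about this.

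The plan is to start from the representation $\rg(X)=\sup_{p\in[0,1]}\{\ES_p(X)-g(p)\}$ established in Proposition~\ref{prop: cash additive representation}, to insert the classical coherent dual representation of each $\ES_p$, and then to exchange the two suprema. Recall that for every $p\in[0,1)$ one has
\[
\ES_p(X)=\sup\left\{\E_\probq[X] \mid \probq\in\cP^\infty_\probp,\ \left\|\frac{d\probq}{d\probp}\right\|_\infty\le\frac{1}{1-p}\right\},
\]
which I would cite from \cite{FS16}. The key observation is that the density constraint $\|d\probq/d\probp\|_\infty\le 1/(1-p)$ is equivalent to $1-\|d\probq/d\probp\|_\infty^{-1}\le p$. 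Writing $p_\probq:=1-\|d\probq/d\probp\|_\infty^{-1}$, a measure $\probq\in\cP^\infty_\probp$ is thus admissible at level $p$ precisely when $p_\probq\le p$. Moreover $\E_\probp[d\probq/d\probp]=1$ forces $\|d\probq/d\probp\|_\infty\ge1$, while finiteness of the density forces $\|d\probq/d\probp\|_\infty<\infty$, so that $p_\probq\in[0,1)$ for every $\probq\in\cP^\infty_\probp$.

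Before substituting, I would first reduce the outer supremum from $[0,1]$ to $[0,1)$, since the endpoint $p=1$ corresponds to $\|d\probq/d\probp\|_\infty=\infty$ and is invisible to the measures in $\cP^\infty_\probp$. The claim is that $\rg(X)=\sup_{p\in[0,1)}\{\ES_p(X)-g(p)\}$ in all cases. If $g(1)=\infty$, the term at $p=1$ equals $-\infty$ by the convention $\infty-\infty=-\infty$ and may be dropped. If $g(1)<\infty$, then monotonicity of $g$ gives $g(1^-)\le g(1)<\infty$, and the continuity of the profile $p\mapsto\ES_p(X)$ (already used in the proof of Proposition~\ref{prop:coherent}) yields $\lim_{p\to1^-}\{\ES_p(X)-g(p)\}=\ES_1(X)-g(1^-)\ge\ES_1(X)-g(1)$, so the supremum over $[0,1)$ already dominates the omitted $p=1$ term. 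This boundary analysis, together with the bookkeeping of the infinity conventions, is the one delicate point, and I expect it to be the main obstacle.

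With this reduction in hand, substituting the dual representation expresses $\rg(X)$ as a supremum over all pairs $(p,\probq)$ with $p\in[0,1)$ and $p_\probq\le p$, namely
\[
\rg(X)=\sup_{p\in[0,1)}\ \sup_{\probq\,:\,p_\probq\le p}\left\{\E_\probq[X]-g(p)\right\}.
\]
Exchanging the order of the two suprema---which is unconditionally valid, as it merely reparametrises the same feasible region---gives $\rg(X)=\sup_{\probq\in\cP^\infty_\probp}\sup_{p\in[p_\probq,1)}\{\E_\probq[X]-g(p)\}$. For fixed $\probq$ the inner supremum involves only the term $-g(p)$; since $g$ is increasing, $-g$ attains its supremum over $[p_\probq,1)$ at the left endpoint $p=p_\probq$, so the inner supremum equals $\E_\probq[X]-g(p_\probq)$. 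Substituting $p_\probq=1-\|d\probq/d\probp\|_\infty^{-1}$ then yields exactly the asserted formula, which completes the proof.
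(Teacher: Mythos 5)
Your proof is correct and takes essentially the same route as the paper: insert the classical dual representation of $\ES_p$ into $\rg(X)=\sup_{p}\{\ES_p(X)-g(p)\}$, exchange the two suprema over $(p,\probq)$, and use monotonicity of $g$ to evaluate the inner supremum at the left endpoint $p_\probq=1-\|d\probq/d\probp\|_\infty^{-1}$. The only (cosmetic) difference is that the paper invokes the dual representation for all $p\in[0,1]$, with the constraint vacuous at $p=1$ so that the endpoint is absorbed into the set $D(\probq)=[1-\|d\probq/d\probp\|_\infty^{-1},1]$, whereas you restrict to $p\in[0,1)$ and dispose of $p=1$ by a separate, correct continuity and convention argument.
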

\begin{proof}
For notational convenience, for every $\probq\in\cP^\infty_\probp$ set
\[
D(\probq) = \bigg\{p\in[0,1] \,\bigg\vert\,	\frac{d\probq}{d\probp}\leq\frac{1}{1-p}\bigg\} = \bigg[1-\bigg\|\frac{d\probq}{d\probp}\bigg\|^{-1}_\infty,1\bigg].
\]
Take $X\in L^1$. The well-known dual representation of ES states that
\[
\ES_p(X) = \sup\bigg\{\E_\probq[X] \,\Big\vert\, \probq\in\cP^\infty_\probp,\  \tfrac{d\probq}{d\probp}\leq\frac{1}{1-p}\bigg\}
\]
for every $p\in[0,1]$; see, e.g., \cite{FS16}. Then, it follows that
\begin{eqnarray*}
\rg(X)
&=&
\sup_{p\in[0,1]}\bigg\{\sup_{\probq\in\cP^\infty_\probp, \ p\in D(\probq)}\{\E_\probq[X]-g(p)\}\bigg\} \\
&=&
\sup_{\probq\in\cP^\infty_\probp}\bigg\{\sup_{p\in D(\probq)}\{\E_\probq[X]-g(p)\}\bigg\} \\
&=&
\sup_{\probq\in\cP^\infty_\probp}\bigg\{\E_\probq[X]-\inf_{p\in D(\probq)}g(p)\bigg\}.
\end{eqnarray*}
It remains to observe that the above infimum equals $g(1-\|d\probq/d\probp\|^{-1}_\infty)$
by monotonicity of $g$.
\end{proof}


\section{Benchmark-adjusted ES}
\label{sect: SSD}

In this section we focus on a special class of adjusted ES's for which the target risk profiles are expressed in terms of the ES profile of a reference random loss. As shown below, these special adjusted ES's are intimately linked with second-order stochastic dominance.

\begin{definition}
Consider a functional $\rho:L^1\to(-\infty,\infty]$.
\begin{enumerate}[(1)]
\item $\rho$ is called a {\em benchmark-adjusted ES} if there exists $Z\in L^1$ such that for every $X\in L^1$
\[
\rho(X) = \sup_{p\in [0,1]}\{\ES_p(X)-\ES_p(Z)\}.
\]
\item $\rho$ is called an {\em SSD-based risk measure} if there exists $Z\in L^1$ such that for every $X\in L^1$
\[
\rho(X) = \inf\{m\in\R \mid X-m\SSD Z\}.
\]
\end{enumerate}
\end{definition}

\smallskip

It is clear that benchmark-adjusted ES's are special instances of adjusted ES's for which the target risk profile is defined in terms of the ES profile of a benchmark random loss. The distribution of this random loss may correspond, for example, to the (stressed) historical loss distribution of the underlying position or to a target (risk-class specific) loss distribution. It is also clear that SSD-based risk measures are nothing but monetary risk measures associated with acceptance sets defined through second-order stochastic dominance.

\smallskip

The classical characterization of second-order stochastic dominance in terms of ES can be used to show that benchmark-adjusted ES's coincide with SSD-based risk measures. In addition, we provide a simple characterization of this class of risk measures.

\begin{theorem}
\label{prop:max}
For a monetary risk measure $\rho:L^1\to(-\infty,\infty]$ the following are equivalent:
\begin{enumerate}[(i)]
  \item $\rho$ is a benchmark-adjusted ES.
  \item $\rho$ is an SSD-based risk measure.
   \item $\rho$ is consistent with $\SSD$ and the set $\{X\in L^1 \mid \rho(X)\leq0\}$ has an $\SSD$-minimum element.
\end{enumerate}
\end{theorem}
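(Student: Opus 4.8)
The plan is to route everything through the classical characterization of second-order stochastic dominance in terms of Expected Shortfall, namely that for $X,Z\in L^1$ one has $X\SSD Z$ if and only if $\ES_p(X)\leq\ES_p(Z)$ for every $p\in[0,1]$. Granting this fact, the cash additivity $\ES_p(X-m)=\ES_p(X)-m$ turns the defining condition of an SSD-based risk measure into an ES inequality: for a fixed benchmark $Z$,
$$X-m\SSD Z \iff \ES_p(X)-m\leq\ES_p(Z)\ \text{for all } p \iff m\geq\sup_{p\in[0,1]}\{\ES_p(X)-\ES_p(Z)\}.$$
Taking the infimum over admissible $m$ shows that the SSD-based risk measure with benchmark $Z$ and the benchmark-adjusted ES with the same benchmark $Z$ are literally the same functional. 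This yields (i)$\iff$(ii) with no further work, and it is the computational heart of the argument.

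For (ii)$\Rightarrow$(iii) I would exploit the identity just obtained. Writing $\rho(X)=\sup_{p}\{\ES_p(X)-\ES_p(Z)\}$, the equivalence above with $m=0$ gives $\rho(X)\leq0\iff X\SSD Z$; hence the acceptance set is exactly $\{X\in L^1\mid X\SSD Z\}$. Reflexivity of $\SSD$ gives $Z$ in the acceptance set, and by construction every acceptable position dominates $Z$, so $Z$ is an $\SSD$-minimum of the acceptance set. Consistency with $\SSD$ then follows from the translation invariance and transitivity of $\SSD$: if $X\SSD Y$ then $X-m\SSD Y-m$, so $Y-m\SSD Z$ forces $X-m\SSD Z$ by transitivity; thus the infimum defining $\rho(X)$ is taken over a set at least as large as that for $\rho(Y)$, giving $\rho(X)\leq\rho(Y)$.

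The implication (iii)$\Rightarrow$(ii) is where the monetary structure does the work. Since $\rho$ is cash additive, it is recovered from its acceptance set $\cA$ via $\rho(X)=\inf\{m\in\R\mid X-m\in\cA\}$. Letting $Z^*$ be the postulated $\SSD$-minimum of $\cA$, I would prove the set identity $\{m\mid X-m\in\cA\}=\{m\mid X-m\SSD Z^*\}$: the inclusion $\subseteq$ is immediate since $Z^*$ is a minimum of $\cA$, while $\supseteq$ uses consistency, as $X-m\SSD Z^*$ together with $Z^*\in\cA$ (so $\rho(Z^*)\leq0$) yields $\rho(X-m)\leq\rho(Z^*)\leq0$. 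Passing to infima identifies $\rho$ with the SSD-based risk measure having benchmark $Z^*$.

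The main obstacle is conceptual rather than technical: one must be careful that acceptability $\rho(X)\leq0$ is equivalent to the \emph{pointwise} dominance $X\SSD Z$, and not merely to $X-m\SSD Z$ for some $m$ near the defining infimum. The ES representation dissolves this cleanly, since $p\mapsto\ES_p(X)-\ES_p(Z)$ makes the half-line $\{m\mid X-m\SSD Z\}$ manifestly a closed up-set equal to $[\rho(X),\infty)$; without the ES characterization one would instead have to argue separately that $\SSD$ is preserved under the constant-shift limit, which is precisely what the ES reformulation supplies for free. Beyond this, the only remaining points requiring care are the elementary order properties of $\SSD$ used above (reflexivity, transitivity, translation invariance, and that pointwise domination implies $\SSD$), each of which follows directly from the definition through increasing concave $u$.
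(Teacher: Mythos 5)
Your proposal is correct and takes essentially the same route as the paper's proof: both hinge on the classical characterization $X\SSD Z\iff\ES_p(X)\le\ES_p(Z)$ for all $p\in[0,1]$ combined with cash additivity, which makes (i) and (ii) literally the same functional. The only cosmetic differences are that for (ii)$\Rightarrow$(iii) you use the already-established sup representation to identify the acceptance set with $\{X\in L^1\mid X\SSD Z\}$ directly, where the paper instead passes to a sequence $m_n\downarrow\rho(X)$ and argues $X-\rho(X)\SSD Z$, and that you spell out the set identity behind (iii)$\Rightarrow$(ii), which the paper dismisses as clear.
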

\begin{proof}
Recall that for all $X\in L^1$ and $Z\in L^1$ we have $X\SSD Z$ if and only if $\ES_p(X)\leq\ES_p(Z)$ for every $p\in[0,1]$; see, e.g., Theorem 4.A.3 in \cite{bookCX}. For convenience, set $\cA=\{X\in L^1 \mid \rho(X)\leq0\}$. To show that (i) implies (ii), assume that $\rho$ is a benchmark-adjusted ES with respect to $Z\in L^1$. Then, for every $X\in L^1$
\begin{align*}
\rho(X) &= 
\inf\{m\in\R \mid X-m\in\cA\} \\
&= \inf\{m\in\R \mid \ES_p(X)-m\leq\ES_p(Z), \ \forall p\in[0,1]\} \\
&= \inf\{m\in\R \mid X-m\SSD Z\}.
\end{align*}
To show that (ii) implies (i), assume that $\rho$ is SSD-based with respect to $Z\in L^1$. Then, we have
\begin{align*}
\rho(X) &= 
\inf\{m\in\R \mid X-m\SSD Z\} \\
&= \inf\{m\in\R \mid \ES_p(X)-m\leq\ES_p(Z), \ \forall p\in[0,1]\} \\
&= \sup_{p\in[0,1]}\{\ES_p(X)-\ES_p(Z)\}.
\end{align*}
It is clear that (iii) implies (ii). Finally, to show that (ii) implies (iii), assume that $\rho$ is an SSD-based risk measure with respect to $Z\in L^1$. It is clear that $Z\in\cA$. Now, take an arbitrary $X\in\cA$. We find a sequence $(m_n)\subset\R$ such that $m_n\downarrow\rho(X)$ and $X-m_n\SSD Z$ for every $n\in\N$. This implies that $X-\rho(X)\SSD Z$. Since $\rho(X)\leq0$, we infer that $X\SSD Z$ as well. This shows that $\cA$ has an SSD-minimum element. To establish that $\rho$ is consistent with $\SSD$, take arbitrary $X,Y\in L^1$ satisfying $X\SSD Y$. For every $m\in\R$ such that $Y-m\SSD Z$ we clearly have that $X-m\SSD Y-m\SSD Z$. This implies that $\rho(X)\leq\rho(Y)$ and concludes the proof.
\end{proof}

\smallskip

The preceding result delivers an interesting representation of a benchmark-adjusted ES in terms of utility functions which helps highlighting its ``risk aversion'' nature. More precisely, we show that an adjusted ES with risk profile given by the ES profile of a benchmark random loss $Z\in L^1$ determines the minimal amount of capital that makes {\em every} risk-averse agent better off than being exposed to the loss $Z$. In this sense, one may view a benchmark-adjusted ES as a worst-case utility-based risk measure over all conceivable risk-averse profiles. Recall that, if one moves from utility functions to loss functions, then utility-based risk measures correspond to the so-called shortfall risk measures as defined, e.g., in \cite[Section 4.9]{FS16}.

\begin{proposition}
Let $Z\in L^1$ and consider the risk profile $g(p)=\ES_p(Z)$ for every $p\in[0,1]$. Moreover, let $\cU$ be the family of all (nonconstant) concave and increasing functions $u:\R\to\R$. Then, for every $X\in L^1$
\[
\ES^g(X) = \sup_{u\in\cU}\inf\{m\in\R \mid \E[u(m-X)]\geq\E[u(-Z)]\}.
\]
\end{proposition}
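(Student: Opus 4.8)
The plan is to combine the SSD-representation of $\ES^g$ established in Theorem \ref{prop:max} with the definition of second-order stochastic dominance in terms of expected utilities, and then to reduce the identity to an elementary fact about infima of intersections of half-lines. First I would note that, since $g(p)=\ES_p(Z)$, the risk measure $\ES^g$ is a benchmark-adjusted ES with respect to $Z$, so by Theorem \ref{prop:max} it is SSD-based:
\[
\ES^g(X) = \inf\{m\in\R \mid X-m\SSD Z\}.
\]
By the definition of $\SSD$, the condition $X-m\SSD Z$ means $\E[u(m-X)]\geq\E[u(-Z)]$ for every increasing concave $u:\R\to\R$. Constant functions contribute only a trivial equality, so this is equivalent to requiring the same inequality for every $u\in\cU$. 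For each fixed $u\in\cU$ I would set
\[
m_u(X) := \inf\{m\in\R \mid \E[u(m-X)]\geq\E[u(-Z)]\},
\]
so that the right-hand side of the claim is precisely $\sup_{u\in\cU}m_u(X)$.

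Next I would observe that, because $u$ is increasing, the map $m\mapsto\E[u(m-X)]$ is nondecreasing; hence $\{m\mid\E[u(m-X)]\geq\E[u(-Z)]\}$ is a half-line with left endpoint $m_u(X)$ (empty, with $m_u(X)=+\infty$, if the inequality never holds). Consequently the set appearing in the SSD-representation decomposes as an intersection of half-lines,
\[
\{m\in\R \mid X-m\SSD Z\} = \bigcap_{u\in\cU}\{m\in\R \mid \E[u(m-X)]\geq\E[u(-Z)]\}.
\]

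The final step is the elementary identity that the infimum of an intersection of half-lines equals the supremum of their left endpoints, namely
\[
\inf\bigcap_{u\in\cU}\{m \mid \E[u(m-X)]\geq\E[u(-Z)]\} = \sup_{u\in\cU}m_u(X).
\]
For ``$\geq$'', any $m$ in the intersection satisfies $m\geq m_u(X)$ for all $u$, hence $m\geq\sup_u m_u(X)$; for ``$\leq$'', for any $\e>0$ the number $\sup_u m_u(X)+\e$ exceeds every $m_u(X)$ and so lies in each half-line, whence the infimum is at most $\sup_u m_u(X)+\e$, and $\e\downarrow0$ concludes. Combining this with the two preceding displays yields $\ES^g(X)=\sup_{u\in\cU}m_u(X)$, which is the asserted identity.

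The step I would treat most carefully—and the only genuine subtlety—is the bookkeeping in the half-line argument: whether each half-line is open or closed at its endpoint $m_u(X)$, and whether some expectations equal $-\infty$ (in which case $m_u(X)=+\infty$ and both sides of the identity are $+\infty$). None of this affects the two inequalities, since the argument only uses that each set contains the open ray $(m_u(X),\infty)$ and is contained in $[m_u(X),\infty)$. The integrability needed to make each $\E[u(m-X)]$ well defined in $[-\infty,\infty)$ follows from $X\in L^1$ together with the affine upper bound enjoyed by any increasing concave $u$.
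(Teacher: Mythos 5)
Your proof is correct and takes essentially the same route as the paper's: both reduce the claim to writing the SSD acceptance set as the intersection $\bigcap_{u\in\cU}\cA_u$ of the utility-based acceptance sets (you via Theorem \ref{prop:max} and the definition of $\SSD$, the paper directly via the ES characterization of second-order stochastic dominance) and then conclude by the same elementary observation that the infimum of capital levels over an intersection of upward-closed sets equals the supremum of the individual infima. Your explicit bookkeeping about half-line endpoints, upward-closedness, and the $+\infty$ edge cases only spells out what the paper's ``take $k$ strictly above the supremum'' step uses implicitly.
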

\begin{proof}
Let $\cA=\{X\in L^1 \mid \forall p\in[0,1], \ \ES_p(X)\leq\ES_p(Z)\}$ and set $\cA_u=\{X\in L^1 \mid \E[u(-X)]\geq\E[u(-Z)]\}$ for every $u\in\cU$. To establish the claim, we can equivalently prove that for every $X\in L^1$
\begin{equation}
\label{eq: sup of shortfall risk measures}
\inf\{m\in\R \mid X-m\in\cA\} = \sup_{u\in\cU}\inf\{m\in\R \mid X-m\in\cA_u\}.
\end{equation}
To this effect, Theorem 4.A.3 in \cite{bookCX} implies that
\[
\cA = \{X\in L^1 \mid X\SSD Z\} = \{X\in L^1 \mid \forall u\in\cU, \ \E[u(-X)]\geq\E[u(-Z)]\} = \bigcap_{u\in\cU}\cA_u.
\]
This implies \eqref{eq: sup of shortfall risk measures}. Indeed, the inequality ``$\geq$'' is clear. To show the inequality ``$\leq$'', take any number $k>\sup_{u\in\cU}\inf\{m\in\R \mid X-m\in\cA_u\}$. Then, for every $u\in\cU$ we must have $X-k\in\cA_u$ or, equivalently, $X-k\in\cA$. This yields $k\geq\inf\{m\in\R \mid X-m\in\cA\}$. Taking the infimum over such $k$'s delivers the desired inequality and completes the proof.
\end{proof}

\smallskip

In light of the relevance of benchmark adjusted ES's, we are interested in characterizing when the acceptable risk profile $g$ of an adjusted ES can be expressed in terms of an ES profile. To this effect, it is convenient to introduce the following additional class of risk measures, which will be shown to contain all benchmark-adjusted ES's. We denote by $L^0$ the space of all random variables.

\begin{definition}
A functional $\rho:L^1\to(-\infty,\infty]$ is called a {\em quantile-adjusted ES} if there exists $Z\in L^0$ such that for every $X\in L^1$
\[
\rho(X) = \sup_{p\in [0,1]}\{\ES_p(X)-\VaR_p(Z)\}.
\]
\end{definition}

\smallskip

To establish our desired characterization, for a risk profile $g\in\cG$ we 
define $h_g:[0,1]\to(-\infty,\infty]$ by
$$
h_g(p) := (1-p)g(p).
$$
Here, we set $0\cdot\infty=0$ so that $h_g(1)=0$. Moreover, we introduce the following sets:
$$
\mathcal{G}_{\VaR} := \{g\in\mathcal G \mid \mbox{$g$ is finite on $[0,1)$, left-continuous on $[0,1]$, and right-continuous at $0$}\},
$$
$$
\mathcal{G}_{\ES} := \{g\in\mathcal G_{\VaR} \mid \mbox{$h_g$ is concave on $(0,1)$ and left-continuous at $1$}\}.
$$

\smallskip

\begin{lemma}
\label{lem: characterizing profiles}
For every risk profile $g\in\cG$ the following statements hold:
\begin{enumerate}[(i)]
\item $g\in \mathcal G_{\VaR}$ if and only if there exists a random variable $Z\in L^0$ that is bounded from below and satisfies $g(p)=\VaR_p(Z)$ for every $p\in [0,1]$.
\item $g\in \mathcal G_{\ES}$ if and only if there exists a random variable $Z\in L^1$ such that $g(p)={\ES}_p(Z)$ for every $p\in [0,1]$.
\end{enumerate}
\end{lemma}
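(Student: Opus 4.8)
The plan is to prove each equivalence by checking that the listed analytic conditions are precisely those characterizing left-quantile functions (for (i)) and $\ES$ profiles (for (ii)), using the atomlessness of $(\Omega,\cF,\p)$ to realize every admissible profile as the quantile function of an explicitly constructed $Z$. For part (i), the ``if'' direction is routine: if $g(p)=\VaR_p(Z)$ with $Z$ bounded below, then $g$ is increasing and finite on $[0,1)$ (finiteness at $0$ being exactly $\essinf Z>-\infty$), left-continuity on $(0,1)$ is the standard left-continuity of the lower quantile function, left-continuity at $1$ follows from $\lim_{p\uparrow1}\VaR_p(Z)=\esssup Z=\VaR_1(Z)$, and right-continuity at $0$ from $\lim_{p\downarrow0}\VaR_p(Z)=\essinf Z=\VaR_0(Z)$; hence $g\in\mathcal G_{\VaR}$. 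For ``only if'' I would take $U$ uniformly distributed on $(0,1)$ (available by atomlessness) and set $Z:=g(U)$. Since $g$ is increasing and left-continuous on $(0,1)$, the quantile transform gives $\VaR_p(Z)=g(p)$ for $p\in(0,1)$; finiteness of $g$ on $[0,1)$ makes $Z$ real-valued and bounded below by $g(0)$, so $Z\in L^0$. The two endpoints are exactly where the boundary hypotheses enter: right-continuity at $0$ yields $\essinf Z=\lim_{u\downarrow0}g(u)=g(0)$, and left-continuity at $1$ yields $\esssup Z=\lim_{u\uparrow1}g(u)=g(1)$.

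For part (ii), recall $h_g(p)=(1-p)g(p)$, so that $g=\ES_\cdot(Z)$ amounts to $h_g(p)=\int_p^1\VaR_q(Z)\,\d q$ for $p\in[0,1)$. The ``if'' direction is again a direct computation: writing $h_g(p)=\int_p^1\VaR_q(Z)\,\d q$, its derivative is $-\VaR_p(Z)$, which is non-increasing, so $h_g$ is concave; left-continuity of $h_g$ at $1$ is the statement $\int_p^1\VaR_q(Z)\,\d q\to0$, which holds since $Z\in L^1$; and the remaining $\mathcal G_{\VaR}$ conditions follow as in (i) together with continuity of the integral. For the ``only if'' direction the idea is to invert this relation. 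Since $h_g$ is concave and finite on $(0,1)$, it has a non-increasing, left-continuous left-hand derivative $D_-h_g$, and I set $v:=-D_-h_g$, which is therefore non-decreasing and left-continuous on $(0,1)$, i.e.\ a bona fide quantile function. Using that concave functions are locally absolutely continuous, I recover $h_g(p')-h_g(p)=-\int_p^{p'}v(q)\,\d q$ on $(0,1)$, and letting $p'\uparrow1$ with $h_g(1^-)=0$ (left-continuity of $h_g$ at $1$) yields precisely $h_g(p)=\int_p^1 v(q)\,\d q$.

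I would then define $Z:=v(U)$ with $U$ uniform on $(0,1)$, so that $\VaR_p(Z)=v(p)$ on $(0,1)$. Integrability $Z\in L^1$ follows by splitting $\int_0^1|v|$ at the sign change $p_0$ of the monotone $v$: the positive part equals $\int_{p_0}^1 v=h_g(p_0)<\infty$ and the negative part equals $\int_0^{p_0}(-v)=h_g(p_0)-g(0)<\infty$, where $\int_0^1 v=h_g(0)=g(0)$ comes from right-continuity of $g$ at $0$. Finally I check $\ES_p(Z)=g(p)$ on all of $[0,1]$: for $p\in(0,1)$ it is $\frac{1}{1-p}\int_p^1 v=\frac{h_g(p)}{1-p}=g(p)$; at $p=0$ it is $\E[Z]=\int_0^1 v=g(0)$; and at $p=1$ it is $\esssup Z=\lim_{p\uparrow1}\ES_p(Z)=\lim_{p\uparrow1}g(p)=g(1)$, using left-continuity of $g$ at $1$.

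I expect the main obstacle to be the ``only if'' direction of (ii): faithfully translating the two analytic hypotheses on $h_g$ into the probabilistic construction. Concavity is what guarantees that the candidate $v=-D_-h_g$ is a legitimate monotone, left-continuous quantile function, while left-continuity of $h_g$ at $1$ is exactly what forces $\int^1 v<\infty$, i.e.\ $Z\in L^1$ and the correct behavior of the $\ES$ profile as $p\uparrow1$ even in the degenerate case $\esssup Z=+\infty$ (equivalently $g(1)=\infty$). Handling that endpoint and justifying the limit $p'\uparrow1$ in the fundamental-theorem-of-calculus step are the delicate points; everything else reduces to standard properties of the quantile transform and the continuity of $p\mapsto\ES_p$.
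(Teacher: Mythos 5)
Your proof is correct and follows essentially the same route as the paper: in both parts you realize the profile via the quantile transform $Z=g(U)$ (resp.\ $Z=-h_g'(U)$ with $h_g'$ the left derivative of the concave function $h_g$) for $U$ uniform, exactly as the paper does. The only difference is that you spell out details the paper treats as immediate --- the fundamental-theorem-of-calculus step for concave $h_g$, the integrability of $v(U)$ via the sign-split of $\int_0^1 v$, and the endpoint cases $p=0$ and $p=1$ where right-continuity at $0$ and left-continuity at $1$ enter --- which is a faithful elaboration rather than a different argument.
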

\begin{proof}
{\em (i)} The ``if'' part is clear. For the ``only if'' part, let $U$ be a uniform random variable on $[0,1]$ and set $Z=g(U)$. Then, it is well known that $\VaR_p(Z)=g(p)$ for every $p\in[0,1]$. Moreover, since $g(0)>-\infty$, we see that $Z$ is bounded from below.

\smallskip

{\em (ii)} The ``if'' part is straightforward. For the ``only if'' part, let $U$ be a uniform random variable on $[0,1]$. We denote by $h_g'$ the left derivative of $h_g$. Then, for every $p\in[0,1)$ we have
$$
\ES_p(-h_g'(U)) = -\frac{1}{1-p}\int_p^1 h_g'(u)\d u = -\frac{h_g(1)-h_g(p)}{1-p} = g(p).
$$
This shows that, by taking $Z=-h_g'(U)$, we have $g(p)=\ES_p(Z)$ for every $p\in [0,1)$. The left continuity of $g$ and $\ES_{\cdot}(Z)$ at $1$ gives the same equality for $p=1$.
\end{proof}

\smallskip

As a direct consequence of the previous lemma we derive a characterization of quantile- and benchmark-adjusted ES's in terms of the underlying risk profile.

\begin{theorem}
\label{prop: on VaR- and ES-adjusted ES}
For every risk profile $g\in\cG$ the following statements hold:
\begin{enumerate}[(i)]
\item There exists $Z\in L^0$ that is bounded from below and such that $\rg$ is a quantile-adjusted ES with respect to $Z$ if and only if $g\in\mathcal{G}_{\VaR}$.
\item There exists $Z\in L^1$ such that $\rg$ is an benchmark-adjusted ES with respect to $Z$ if and only if $g\in\mathcal{G}_{\ES}$.
\end{enumerate}
\end{theorem}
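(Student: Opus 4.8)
The plan is to read both equivalences off Lemma~\ref{lem: characterizing profiles}, using the sup-representation of Proposition~\ref{prop: cash additive representation} to translate between a risk profile and the benchmark loss $Z$. The two ``if'' implications are immediate. If $g\in\mathcal{G}_{\VaR}$, Lemma~\ref{lem: characterizing profiles}(i) gives a $Z\in L^0$ bounded from below with $g(p)=\VaR_p(Z)$ for every $p$, and substituting into $\rg(X)=\sup_{p\in[0,1]}\{\ES_p(X)-g(p)\}$ displays $\rg$ as a quantile-adjusted ES with respect to $Z$. The benchmark case is identical, with Lemma~\ref{lem: characterizing profiles}(ii) producing $Z\in L^1$ such that $g(p)=\ES_p(Z)$; note that membership in $\mathcal{G}_{\ES}$ already encodes the concavity of $h_g$, so nothing extra is needed there.

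For the two ``only if'' implications I would reverse the reasoning. Suppose $\rg$ is quantile-adjusted with respect to a bounded-below $Z\in L^0$. Setting $\tilde g(p)=\VaR_p(Z)$, Lemma~\ref{lem: characterizing profiles}(i) gives $\tilde g\in\mathcal{G}_{\VaR}$, and by Proposition~\ref{prop: cash additive representation} the profile $\tilde g$ induces the same functional as $g$. It then remains to identify the given $g$ with $\tilde g$, after which $g\in\mathcal{G}_{\VaR}$ follows. The benchmark implication runs the same way with $\tilde g(p)=\ES_p(Z)$ and Lemma~\ref{lem: characterizing profiles}(ii), and here I would use Theorem~\ref{prop:max}: since a benchmark-adjusted ES has an acceptance set with an $\SSD$-minimum element, and that element is $\SSD$-equivalent to $Z$, its ES profile is exactly $\tilde g$, which anchors the identification intrinsically in terms of $\rg$.

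The delicate step, and what I expect to be the crux, is precisely this identification of $g$ with $\tilde g$. A priori distinct profiles in $\cG$ can define one and the same adjusted ES — only the interaction of $g$ with the attainable ES profiles is visible to $\rg$ — so from the equality of functionals one cannot naively read off $g=\tilde g$ pointwise. The natural fix is to work with the canonical profile $p\mapsto\sup\{\ES_p(X)\mid X\in L^1,\ \rg(X)\le 0\}$ determined by the acceptance set and to match it against $\tilde g$; the description of attainable ES profiles in Lemma~\ref{lem: characterizing profiles} is exactly what is needed to carry this out on $[0,1)$, while the value at $p=1$, which never contributes to $\rg$ because $g(1)\ge\lim_{p\to 1^-}g(p)$ makes the corresponding term dominated, is fixed by the left-continuity imposed in the definitions of $\mathcal{G}_{\VaR}$ and $\mathcal{G}_{\ES}$. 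Making this canonical-profile comparison rigorous, rather than the bookkeeping with Lemma~\ref{lem: characterizing profiles}, is where the real work lies.
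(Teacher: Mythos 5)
Your two ``if'' directions coincide with the paper's entire proof: the paper gives no separate argument for Theorem~\ref{prop: on VaR- and ES-adjusted ES}, presenting it as ``a direct consequence'' of Lemma~\ref{lem: characterizing profiles}. That is, the paper reads ``$\rg$ is a quantile-adjusted (resp.\ benchmark-adjusted) ES with respect to $Z$'' as the pointwise identity $g(p)=\VaR_p(Z)$ (resp.\ $g(p)=\ES_p(Z)$) for all $p$, which is exactly what the lemma characterizes, so nothing remains to prove. The problem lies in your ``only if'' program, which aims at the stronger, literal functional reading.

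The identification step you flag as the crux is not merely delicate --- under the literal reading it is false, and your own remark about $p=1$ contains the reason. Take $\tilde g\in\cG_{\VaR}$ with $\tilde g(1)=\lim_{p\uparrow 1}\tilde g(p)<\infty$, e.g.\ $\tilde g(p)=p=\VaR_p(U)$ for $U$ uniform on $[0,1]$, and define $g\in\cG$ by $g=\tilde g$ on $[0,1)$ and $g(1)=\infty$. Since $p\mapsto\ES_p(X)$ is continuous on $[0,1]$ (with values in $(-\infty,\infty]$) and the convention $\infty-\infty=-\infty$ kills the $p=1$ term, one checks $\ES^g(X)=\sup_{p\in[0,1)}\{\ES_p(X)-\tilde g(p)\}=\ES^{\tilde g}(X)$ for every $X\in L^1$. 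Hence $\ES^g$ \emph{is} a quantile-adjusted ES with respect to the bounded-below $Z=U$, yet $g\notin\cG_{\VaR}$ because left-continuity at $1$ fails; the same modification of the ES profile of a bounded $Z$ defeats part (ii), since $g\notin\cG_{\ES}$ while $\ES^g$ is benchmark-adjusted. So the functional $\rg$ genuinely does not determine $g$ at $p=1$ (nor at interior left-discontinuities, by the same continuity argument), and your canonical profile $p\mapsto\sup\{\ES_p(X)\mid\rg(X)\le 0\}$ recovers the left-continuous representative $\tilde g$, not the given $g$: matching it against $\tilde g$ would prove ``the canonical profile lies in $\cG_{\VaR}$,'' which is a correct statement but not the one asserted about $g$. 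Likewise your appeal to Theorem~\ref{prop:max} in the benchmark case identifies the ES profile of the $\SSD$-minimum of the acceptance set --- again $\tilde g$, not $g$. The correct takeaway is that the theorem must be read with the pointwise identification built into the hypothesis (as the paper's one-line derivation from Lemma~\ref{lem: characterizing profiles} implicitly does), whereas the gap in your sketch --- deriving $g=\tilde g$ from equality of the functionals --- cannot be closed.
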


\smallskip

\begin{remark}
We infer from Theorem \ref{prop:max} and \ref{prop: on VaR- and ES-adjusted ES} that the classical ES does not belong to the class of SSD-based risk measures as the associated risk profile is not in $\mathcal G_{\ES}$ (see also Proposition \ref{prop:coherent}).
\end{remark}

\smallskip

Since we clearly have $\cG_{\ES}\subset\cG_{\VaR}$, it follows from the above results that every benchmark-adjusted ES is also a quantile-adjusted ES. In particular, this implies that, for every random variable $Z\in L^1$, we can always find a random variable $W\in L^0$ such that $\VaR_p(W)=\ES_p(Z)$ for every $p\in[0,1]$. In words, every ES profile can be reproduced by a suitable VaR profile. As pointed out by the next proposition, the converse result is, in general, not true. In addition, we also show that an adjusted ES need not be a quantile-adjusted ES.

\begin{proposition}
\begin{enumerate}[(i)]
    \item There exists $g\in\cG$ such that $\rg\neq\rh$ for every $h\in\cG_{\VaR}$.
    \item There exists $g\in\cG_{\VaR}$ such that $\rg\neq\rh$ for every $h\in\cG_{\ES}$.
\end{enumerate}
\end{proposition}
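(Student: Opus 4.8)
The plan is to handle the two parts by different means. For part (i) I would exploit positive homogeneity. Fix $p\in(0,1)$ and let $g\in\cG$ be the degenerate profile with $g(q)=0$ for $q\in[0,p]$ and $g(q)=\infty$ for $q\in(p,1]$, so that $\rg=\ES_p$, as observed just before Proposition~\ref{prop: cash additive representation}. This $\rg$ is positively homogeneous. If $\rg=\rh$ held for some $h\in\cG_{\VaR}$, then $\rh$ would be positively homogeneous as well, and Proposition~\ref{prop:coherent} would force $h$ to vanish on $[0,p')$ and to equal $\infty$ on $(p',1]$, where $p'=\sup\{q\in[0,1]\mid h(q)=0\}$ and $\rh=\ES_{p'}$. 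Matching $\ES_{p'}=\ES_p$ yields $p'=p\in(0,1)$, whence $h\equiv\infty$ on the nonempty interval $(p,1)\subset[0,1)$, contradicting that every element of $\cG_{\VaR}$ is finite on $[0,1)$. Thus $\rg\neq\rh$ for all $h\in\cG_{\VaR}$.

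For part (ii) I would instead show that $\rg$ determines its profile through its acceptance set. Take the step profile $g\in\cG_{\VaR}$ defined by $g=0$ on $[0,\tfrac12]$ and $g=1$ on $(\tfrac12,1]$. Then $h_g(p)=(1-p)g(p)$ jumps upward at $p=\tfrac12$ (from $0$ to $\tfrac12$) and is therefore not concave, so $g\notin\cG_{\ES}$. Assume for contradiction that $\rg=\rh$ for some $h\in\cG_{\ES}$. By Lemma~\ref{lem: characterizing profiles}(ii) there is $Z\in L^1$ with $h(p)=\ES_p(Z)$ for all $p$; hence $\rh$ is a benchmark-adjusted ES with respect to $Z$, and by Theorem~\ref{prop:max} it is SSD-based with respect to $Z$, so $Z$ is an $\SSD$-minimum of $\cA_h$. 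Since $\rg=\rh$ gives $\cA_g=\cA_h$, the variable $Z$ is an $\SSD$-minimum of $\cA_g$: on one hand $Z\in\cA_g$ gives $h(p)=\ES_p(Z)\le g(p)$, so $h\le g$; on the other hand every $X\in\cA_g$ satisfies $X\SSD Z$, i.e.\ $\ES_p(X)\le\ES_p(Z)=h(p)$ for all $p$.

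It then remains to establish the recovery identity $\sup_{X\in\cA_g}\ES_p(X)=g(p)$ for every $p\in[0,1]$: combined with $\ES_p(X)\le h(p)$ for all $X\in\cA_g$ this gives $g(p)\le h(p)$, and together with $h\le g$ we obtain $g=h\in\cG_{\ES}$, contradicting $g\notin\cG_{\ES}$. The inequality $\sup_{X\in\cA_g}\ES_p(X)\le g(p)$ is immediate from the definition of $\cA_g$, so the crux—and the step I expect to be the main obstacle—is to exhibit, for each level $p_0$, an acceptable position whose ES profile attains $g$ at $p_0$. For $p_0\le\tfrac12$ the position $X=0$ suffices. For $p_0\in(\tfrac12,1)$ I would use a two-valued quantile function, $\VaR_q(X)=-M$ for $q<p_0$ and $\VaR_q(X)=1$ for $q\ge p_0$, choosing $M\ge(1-p_0)/(p_0-\tfrac12)$ so that $\ES_{1/2}(X)\le0$. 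A direct computation then gives $\ES_{p_0}(X)=1=g(p_0)$ together with $\ES_p(X)\le g(p)$ for every $p$, and letting $p_0\to1$ additionally yields $\esssup X=1=g(1)$. Verifying that this $X$ indeed remains in $\cA_g$ for all $p_0$—equivalently that the resulting concave profile $p\mapsto(1-p)\ES_p(X)$ stays below $h_g$ throughout $[0,1]$—is the only genuinely computational point, and is where I would concentrate the verification.
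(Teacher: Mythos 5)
Your proposal is correct, and it takes a partly different route from the paper's. For part (i) the two arguments share the same engine: the paper also takes the step profile with $\rg=\ES_q$ and then kills any quantile-adjusted representation by a direct scaling computation (pick $r\in(q,1)$ and $X$ with $\ES_r(X)>\ES_q(X)$, write $\ES_q(X)=\frac{1}{\lambda}\rg(\lambda X)\ge\ES_r(X)-\frac{1}{\lambda}\VaR_r(Z)$ and let $\lambda\to\infty$); you obtain the same contradiction by outsourcing the scaling to Proposition \ref{prop:coherent}, whose condition (b) forces $h=\infty$ on $(p,1]$ against finiteness of $h\in\cG_{\VaR}$ on $[0,1)$ --- legitimate, since that proposition precedes this one, and essentially the same idea in different packaging. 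For part (ii) you genuinely diverge: the paper disposes of it in one line from the strictness of $\cG_{\ES}\subset\cG_{\VaR}$ together with Theorem \ref{prop: on VaR- and ES-adjusted ES}, whose ``only if'' direction silently carries the content that the profile is recoverable from the risk measure; your proof supplies exactly that missing recovery step for a concrete $g\notin\cG_{\ES}$, by exhibiting for each $p_0$ an acceptable two-point position attaining $\sup_{X\in\cA_g}\ES_{p_0}(X)=g(p_0)$, and then squeezing $h\le g\le h$ via the $\SSD$-minimum $Z$ of $\cA_h$. What each buys: the paper, brevity; yours, a self-contained argument that makes the uniqueness mechanism explicit. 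Your computations check out: for $p\le\tfrac12$ monotonicity of $p\mapsto\ES_p(X)$ reduces acceptability to $\ES_{1/2}(X)\le0$, i.e.\ $M\ge(1-p_0)/(p_0-\tfrac12)$, and for $p\in(\tfrac12,p_0)$ one has $\ES_p(X)\le(1-p_0)/(1-p)\le1$, while the fixed $X$ already gives $\ES_1(X)=1=g(1)$, so the limit $p_0\to1$ is not even needed. Two minor points worth a sentence each in a polished write-up: the identification $\cA_g=\{X\in L^1\mid\rg(X)\le0\}$, which you use to pass from $\rg=\rh$ to $\cA_g=\cA_h$, needs the easy closedness remark that $\rg(X)\le0$ gives $\ES_p(X)\le g(p)+\e$ for all $\e>0$, hence $X\in\cA_g$; and the detour through Theorem \ref{prop:max} is dispensable, since $h(p)=\ES_p(Z)$ directly yields $Z\in\cA_h$ and $\ES_p(X)\le h(p)$ for every $X\in\cA_h$ by the very definition of $\cA_h$.
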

\begin{proof}
The second assertion follows immediately from Theorem \ref{prop: on VaR- and ES-adjusted ES} and the fact that the inclusion $\cG_{\ES}\subset\cG_{\VaR}$ is strict. To establish the first assertion, fix $q\in(0,1)$ and define $g\in\cG$ by setting
\[
g(p)=
\begin{cases}
0 & \mbox{if} \ p\in[0,q],\\
\infty & \mbox{if} \ p\in(q,1].
\end{cases}
\]
It follows that
$$
\rg(X) = \sup_{p\in [0,q]}\{\ES_p(X)\} = \ES_q(X)
$$
for every $X\in L^1$. We claim that $\rg$ is not a quantile-adjusted ES. To the contrary, suppose that there exists a random variable $Z\in L^0$ that is bounded from below and satisfies
$$
\ES_q(X) = \rg(X) = \sup_{p\in [0,1]}\{\ES_p(X)-\VaR_p(Z)\}
$$
for every $X\in L^1$. Take $r\in(q,1)$ and $X\in\X$ such that $\ES_r(X)>\ES_q(X)$. Then, for each $\lambda>0$
\begin{align*}
\ES_q(X) &= \frac 1\lambda \ES_q(\lambda X) = \frac 1\lambda \sup_{p\in [0,1]}\{\ES_p(\lambda X)-\VaR_p(Z)\} \\
&\ge \frac 1\lambda (\ES_r(\lambda X)-\VaR_r(Z)) = \ES_r(X)-\frac{1}{\lambda}\VaR_r(Z).
\end{align*}
By sending $\lambda\to\infty$, we obtain $\ES_q(X)\ge\ES_r(X)$, which contradicts our assumption on $X$.
\end{proof}

\smallskip

Note that ES is always finite on our domain. Here, we are interested in discussing the finiteness of adjusted ES's associated with risk profiles in the class $\cG_{\VaR}$ and $\cG_{\ES}$. We show that finiteness on the whole reference space $L^1$ can never hold in the presence of a risk profile in $\cG_{\ES}$ while it can hold if we take a risk profile in $\cG_{\VaR}$.

\begin{proposition}\label{prop:3}
Consider a risk profile $g\in\cG$. If $g\in\cG_{\VaR}$, then $\rg$ can be finite on $L^1$. If $g\in\cG_{\ES}$, then $\rg$ cannot be finite on $L^1$.
\end{proposition}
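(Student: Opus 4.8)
The statement is two independent claims, and for each I would exhibit an explicit witness rather than argue abstractly, using throughout the representation $\rg(X)=\sup_{p\in[0,1]}\{\ES_p(X)-g(p)\}$ of Proposition~\ref{prop: cash additive representation} and the characterizations of Lemma~\ref{lem: characterizing profiles}. For the first claim I would produce a single profile $g\in\cG_{\VaR}$ making $\rg$ finite everywhere, the natural candidate being one that blows up at the right endpoint just fast enough to dominate every integrable ES curve, namely $g(p)=(1-p)^{-1}$ on $[0,1)$ and $g(1)=\infty$. First I would verify membership: $g$ is increasing, finite on $[0,1)$, continuous there and hence left-continuous, with $\lim_{p\to1^-}g(p)=\infty=g(1)$, and right-continuous at $0$; equivalently, by Lemma~\ref{lem: characterizing profiles}(i), $g(p)=\VaR_p(Z)$ with $Z=(1-U)^{-1}$ for $U$ uniform on $[0,1]$, which is bounded from below. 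Then I would rewrite $\ES_p(X)-g(p)=(1-p)^{-1}[(1-p)\ES_p(X)-1]$ and invoke the elementary fact that for $X\in L^1$ one has $(1-p)\ES_p(X)=\int_p^1\VaR_q(X)\,\d q\to0$ as $p\to1^-$. Hence $\ES_p(X)-g(p)\to-\infty$, so, using continuity of $p\mapsto\ES_p(X)$ on $[0,1)$ and that the $p=1$ term equals $-\infty$ by the convention $\infty-\infty=-\infty$, the supremum defining $\rg(X)$ is finite for every $X\in L^1$.

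For the second claim I would fix an arbitrary $g\in\cG_{\ES}$ and construct a position on which $\rg$ equals $\infty$. By Lemma~\ref{lem: characterizing profiles}(ii) I may write $g(p)=\ES_p(Z)$ for some $Z\in L^1$, so that $\rg(X)=\sup_{p\in[0,1]}\{\ES_p(X)-\ES_p(Z)\}$. The idea is to perturb the quantile function of $Z$ by an increasing integrable term whose ES-average nevertheless diverges. Concretely, since the space is atomless I would take $X$ to be the random variable with quantile function $\VaR_q(X)=\VaR_q(Z)+(1-q)^{-1/2}$: the sum of two increasing left-continuous functions is again a genuine quantile function, and $\int_0^1(1-q)^{-1/2}\,\d q=2<\infty$ together with $Z\in L^1$ guarantees $X\in L^1$. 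A direct computation then gives $\ES_p(X)-\ES_p(Z)=(1-p)^{-1}\int_p^1(1-q)^{-1/2}\,\d q=2(1-p)^{-1/2}$, which tends to $\infty$ as $p\to1^-$, whence $\rg(X)=\infty$.

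I expect the second claim to be the conceptual crux. The tension to resolve is that $X\in L^1$ only bounds $\int_0^1|\VaR_q(X)|\,\d q$, yet the gap $\ES_p(X)-\ES_p(Z)$ must be unbounded; the resolution is that the $(1-p)^{-1}$ weight inside ES amplifies an integrable quantile perturbation into a divergent one, and $(1-q)^{-1/2}$ is tailored precisely to be integrable while having divergent tail-average. The two points to watch are that the perturbed quantile function stays monotone (handled by choosing an increasing perturbation) and that integrability is preserved (handled because $Z\in L^1$). The first claim is comparatively routine once the candidate $g$ is fixed, the only delicate point being the asymptotic $(1-p)\ES_p(X)\to0$ for integrable $X$.
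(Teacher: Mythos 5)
Your proposal is correct. For the first claim you pick exactly the paper's witness $g(p)=\frac{1}{1-p}$, and your argument is the same in substance: the paper bounds $\ES_p(X)-\frac{1}{1-p}=\frac{1}{1-p}\bigl(\int_p^1\VaR_r(X)\,\rd r-1\bigr)\le 0$ beyond a threshold $q$ where the tail integral drops below $1$, which is just a quantified version of your limit $(1-p)\ES_p(X)\to0$; your extra verification that $g\in\cG_{\VaR}$ via $Z=(1-U)^{-1}$ is a nice touch the paper leaves implicit. For the second claim, however, you take a genuinely different route. The paper splits into two cases: if $Z$ is bounded from above it takes any $X\in L^1$ unbounded from above and uses $\rg(X)\ge\ES_1(X)-\ES_1(Z)=\infty$; if $Z$ is unbounded from above it takes $X=2Z$ and exploits positive homogeneity, $\ES_p(2Z)-\ES_p(Z)=\ES_p(Z)\to\esssup Z=\infty$. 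You instead give a single uniform construction valid for every $Z\in L^1$: perturb the quantile function by the integrable but tail-divergent term $(1-q)^{-1/2}$, yielding the explicit gap $\ES_p(X)-\ES_p(Z)=2(1-p)^{-1/2}\to\infty$ as $p\to1^-$. Your construction buys uniformity (no case distinction), keeps the divergence strictly inside $[0,1)$ so the $p=1$ endpoint and the convention $\infty-\infty=-\infty$ never enter (a point where the paper's second case is actually a bit loose, since its displayed inequality formally evaluates the $p=1$ term), and quantifies the divergence rate; the price is the two side-verifications you correctly flag, namely that the perturbed function is a bona fide quantile function on an atomless space (increasing and left-continuous, so realizable as $h(U)$) and that integrability is preserved. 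The paper's argument is shorter because $X=2Z$ and an unbounded $X$ require no such checks. Both proofs are complete and correct.
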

\begin{proof}
To show the first part of the assertion, set $g(p)=\frac{1}{1-p}$ for every $p\in[0,1]$ (with the convention $\frac{1}{0}=\infty$). Note that $g\in\cG_{\VaR}$. Fix $X\in L^1$ and note that there exists $q\in (0,1)$ such that
\[
\sup_{p\in[q,1]}\int_{p}^1\VaR_r(X)\d r<1.
\]
It follows that
$$
\sup_{p\in [q,1]}\left\{\ES_p(X)-\frac{1}{1-p}\right\} = \sup_{p\in [q,1]}\left\{\frac1 {1-p}\left(\int_p^1 \VaR_r(X)\d r - 1\right)\right\} \le 0 .
$$
Therefore,
$$
\rg(X) \le \max\left\{\sup_{p\in [0,q]}\left\{{\ES}_p(X)-\frac{1}{1-p}\right\},0\right\} \le \max\{{\ES}_q(X),0\} <\infty.
$$
This shows that $\rg$ is finite on the entire $L^1$. To establish the second part of the assertion, take $Z\in L^1$ and set $g(p)=\ES_p(Z)$ for every $p\in[0,1]$. Note that $g\in\cG_{\ES}$ by Lemma~\ref{lem: characterizing profiles}. If $Z$ is bounded from above, then take $X\in L^1$ that is unbounded from above. In this case, it follows that
\[
\rg(X) \ge \ES_1(X)-\ES_1(Z) = \infty.
\]
If $Z$ is unbounded from above, then take $X=2Z\in L^1$. In this case, we have
\[
\rg(X) \ge \ES_1(2Z)-\ES_1(Z) = \ES_1(Z) = \infty.
\]
Hence, we see that $\rg$ is never finite on $L^1$.
\end{proof}

\smallskip

The next result improves Proposition~\ref{prop:selfconv} by showing that the inf-convolution of benchmark-adjusted ES's can still be expressed as an adjusted ES.

\begin{proposition}
Let $n\in\N$ and consider the risk profiles $g_1,\dots,g_n\in\cG_{\ES}$. For every $X\in L^1$
\[
\icn{i}{n}\ES^{g_i}(X) = \ES^{\sum_{i=1}^ng_i}(X).
\]
\end{proposition}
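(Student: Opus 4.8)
The inequality ``$\geq$'' is immediate from \eqref{eq: inf conv 1} in Proposition~\ref{prop:selfconv}, which holds for arbitrary profiles in $\cG$, so the entire content of the statement is the reverse inequality $\icn{i}{n}\ES^{g_i}(X)\le\ES^{\sum_{i=1}^n g_i}(X)$. My plan is to exhibit, for each $X$, an admissible allocation $\sum_i X_i=X$ whose total cost undercuts the right-hand side. First I would fix the benchmark. Since $\cG_{\ES}$ is closed under addition ($h_{\sum g_i}=\sum_i h_{g_i}$ is again concave and left-continuous at $1$, and the $\cG_{\VaR}$ requirements pass to sums), we have $\sum_{i=1}^n g_i\in\cG_{\ES}$. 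Applying the construction in Lemma~\ref{lem: characterizing profiles}(ii) to each $g_i$ with a \emph{common} uniform $U$, I set $Z_i=-h_{g_i}'(U)$; by concavity of $h_{g_i}$ these are increasing functions of $U$, hence comonotone, and $\ES_p(Z_i)=g_i(p)$. With $Z:=\sum_i Z_i$, comonotonic additivity of $\ES$ gives $\ES_p(Z)=\sum_i g_i(p)$, so by Theorem~\ref{prop:max} the functional $\ES^{\sum_{i=1}^n g_i}$ is the SSD-based risk measure with benchmark $Z$.

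Next I would reduce to a decomposition. Fix $X$ with $m^\ast:=\ES^{\sum_{i=1}^n g_i}(X)<\infty$ (if $m^\ast=\infty$ the claim already follows from ``$\geq$''). The very definition of $m^\ast$ as a supremum gives $\ES_p(X)-m^\ast\le\ES_p(Z)$ for all $p$, i.e. $W\SSD Z$ for $W:=X-m^\ast$. The goal becomes: find $W_1,\dots,W_n\in L^1$ with $\sum_i W_i=W$ and $W_i\SSD Z_i$ for each $i$. Indeed, $W_i\SSD Z_i$ forces $\ES^{g_i}(W_i)=\sup_p\{\ES_p(W_i)-\ES_p(Z_i)\}\le0$, and reallocating the cash via $X_1:=W_1+m^\ast$, $X_j:=W_j$ for $j\ge2$ yields $\sum_i X_i=X$ and $\sum_i\ES^{g_i}(X_i)=m^\ast+\sum_i\ES^{g_i}(W_i)\le m^\ast$, which is exactly what is needed.

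The heart of the argument, and the step I expect to be the main obstacle, is this comonotone decomposition: \emph{if $W\SSD\sum_i Z_i$ with the $Z_i$ comonotone, then $W=\sum_i W_i$ with $W_i\SSD Z_i$.} I would prove it via Strassen's coupling for second-order stochastic dominance (equivalently, increasing convex order; see the Theorem 4.A-type results in \cite{bookCX}): since $W\SSD Z$, there is a conditional law of a copy $\hat Z\sim Z$ given $W$ with $\E[\hat Z\mid W]\ge W$ a.s. Expressing each $Z_i$ as an increasing function of $Z$ (via generalized inverses, using that on any flat stretch of $Z$ every $Z_i$ is constant), I obtain comonotone components $\hat Z_i\sim Z_i$ with $\sum_i\hat Z_i=\hat Z$, and I set $\phi_i(W):=\E[\hat Z_i\mid W]$, $\phi(W):=\sum_i\phi_i(W)=\E[\hat Z\mid W]\ge W$. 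Defining
\[
W_i:=\phi_i(W)-\tfrac1n\big(\phi(W)-W\big),
\]
one has $\sum_i W_i=W$, and $W_i\le\phi_i(W)$ pointwise gives $W_i\SSD\phi_i(W)$; since $\phi_i(W)$ is a conditional expectation of $\hat Z_i$, Jensen yields $\phi_i(W)\le_{\mathrm{cx}}\hat Z_i\sim Z_i$, hence $\phi_i(W)\SSD Z_i$ and, by transitivity, $W_i\SSD Z_i$. Crucially each $W_i$ is a function of $W$, so the decomposition lives on $(\Omega,\cF,\p)$ and no enrichment of the space is required.

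The technical points I would be careful about are: (a) invoking the precise form of Strassen's theorem for $\SSD$ and reading off the conditional means $\phi_i$ directly from the coupling kernel, so that the $W_i$ are genuine $L^1$ random variables (integrability follows from $\E|\phi_i(W)|\le\E|Z_i|<\infty$); (b) the measurable comonotone splitting $\hat Z=\sum_i\hat Z_i$ sketched above; and (c) the endpoint bookkeeping at $p=1$ under the convention $\infty-\infty=-\infty$, together with the trivial reduction of the case $m^\ast=\infty$ to ``$\geq$''. An equivalent and perhaps more transparent packaging would argue at the level of acceptance sets, using Remark~\ref{rm:ic sum} to show $\cA_{\sum g_i}=\cA_{g_1}+\cdots+\cA_{g_n}$: the inclusion ``$\supseteq$'' is precisely \eqref{eq: inf conv 1} (subadditivity of $\ES$ combined with its comonotonic additivity on $Z$), while ``$\subseteq$'' is the decomposition established above.
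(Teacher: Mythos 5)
Your proof is correct, but it departs from the paper's at the decisive step, so it is worth comparing the two routes. Both arguments begin identically: construct comonotone benchmarks $Z_i$ with $\ES_p(Z_i)=g_i(p)$ from a common uniform $U$ (the paper via $F^{-1}_{Z_i}(U)$ and Theorem \ref{prop: on VaR- and ES-adjusted ES}, you via the explicit construction in Lemma \ref{lem: characterizing profiles}), so that $Z=\sum_{i=1}^n Z_i$ has ES profile $\sum_{i=1}^n g_i$ by comonotonic additivity, and $W:=X-m^\ast\SSD Z$ for $m^\ast=\ES^{\sum_{i=1}^n g_i}(X)$. At that point the paper does \emph{not} decompose $W$ at all: it observes that $Z$ itself lies in $\cA_{g_1}+\cdots+\cA_{g_n}$, hence $\icn{i}{n}\ES^{g_i}(Z)\le0$, and then invokes the SSD-consistency of the inf-convolution (Theorem 4.1 in \cite{MW16}) to transfer this bound to every $X\SSD Z$, concluding via Remark \ref{rm:ic sum}. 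You instead produce the required allocation by hand: a Strassen-type coupling for $\SSD$ (equivalently increasing convex order, as in \cite{bookCX}) gives $\phi_i(W)=\E[\hat Z_i\mid W]$ with $\phi(W)=\sum_i\phi_i(W)\ge W$ a.s., and your $W_i=\phi_i(W)-\frac1n\left(\phi(W)-W\right)$ satisfy $\sum_i W_i=W$ and $W_i\SSD Z_i$ by monotonicity of $\ES$, conditional Jensen, and transitivity; your remark that each $W_i$ is a function of $W$ correctly sidesteps the genuine issue that the Strassen coupling need not be realizable on $(\Omega,\cF,\p)$ itself, and the comonotone splitting $\hat Z_i=u_i(\hat Z)$ is justified by the standard Denneberg decomposition (on flat stretches of $Z$ all $Z_i$ are constant, as you note). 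Each approach buys something: the paper's is shorter, outsourcing the hard step to the consistency theorem of \cite{MW16}; yours is self-contained modulo a classical coupling theorem and proves strictly more, namely that the infimum in the inf-convolution is \emph{attained} by the explicit allocation $X_1=W_1+m^\ast$, $X_j=W_j$ for $j\ge2$ --- in effect a constructive special case of the comonotonic-improvement machinery underlying the cited consistency result. Your auxiliary points (closure of $\cG_{\ES}$ under sums, reduction of the case $m^\ast=\infty$ to the ``$\geq$'' inequality, integrability of the $W_i$, and the $\infty-\infty=-\infty$ bookkeeping at $p=1$) are all accurate.
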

\begin{proof}
The inequality ``$\geq$'' follows from Proposition~\ref{prop:selfconv}. To show the inequality ``$\leq$'', note that there exist $Z_1,\dots,Z_n\in L^1$ such that $\cA_{g_i}=\{X\in L^1 \mid X\SSD Z_i\}$ by Theorem \ref{prop: on VaR- and ES-adjusted ES}. We prove that
\[
\cA:=\{X\in L^1 \mid \rng{\sum_{i=1}^{n}g_i}(X)\le 0\}\subset \sum_{i=1}^n\cA_{g_i}
\]
which, together with Remark \ref{rm:ic sum}, yields the desired inequality. Let $U$ be a uniform random variable and, for any $X\in L^1$, denote by $F^{-1}_X$ the (left) quantile function of $X$. Take $i\in\{1,\ldots,n\}$ and note that $F^{-1}_{Z_i}(U)\sim Z_i$. It follows from the law invariance of $\ES$ that $\ES_p(F^{-1}_{Z_i}(U))=\ES_p(Z_i)$ for every $p\in[0,1]$, so that $F^{-1}_{Z_i}(U)\in\cA_{g_i}$. Since the random variables $F^{-1}_{Z_i}(U)$'s are comonotonic,
\[
\sum_{i=1}^n \ES_p(Z_i)=\sum_{i=1}^n \ES_p(F^{-1}_{Z_i}(U))=\ES_p(Z)
\]
with $Z=\sum_{i=1}^n F^{-1}_{Z_i}(U)$.
We deduce that each $X\in\cA$ satisfies $\ES_{p}(X)\le\ES_p(Z)$ for every $p\in[0,1]$, which is equivalent to $X\SSD Z$. Note that $Z\in \sum_{i=1}^n\cA_{g_i}$ so that $\icn{i}{n}\ES^{g_i}(Z)\le 0$. Since the inf-convolution is consistent with $\SSD$, as shown in Theorem 4.1 by \cite{MW16}, we have  $\icn{i}{n}\ES^{g_i}(X)\le\icn{i}{n}\ES^{g_i}(Z)\le 0$, which implies $X\in\sum_{i=1}^n\cA_{g_i}$ as desired.
\end{proof}

\section{Optimization with benchmark-adjusted ES}
\label{sect: SSD optimization}

Using the characterization of benchmark-adjusted ES's established in Theorem \ref{prop:max}, many optimization problems related to benchmark-adjusted ES's or, equivalently, SSD-based risk measures can be solved explicitly. In this section, we focus on risk minimization and utility maximization problems in the context of a multi-period frictionless market that is complete and arbitrage free. The interest rate is set to be zero for simplicity. As is commonly done in the literature, this type of optimization problems, which are naturally expressed in terms of dynamic investment strategies, can be converted into static optimization problems by way of martingale methods. Below we focus directly on their static counterparts. For more details we refer, e.g., to \cite{SFW09} or \cite{FS16}. In addition, to ensure that all our problems are well defined, we work in the space $L^\infty$ of $\mathbb{P}$-bounded random variables.

\smallskip

In the sequel, we denote by $\mathbb Q$ the risk-neutral pricing measure (whose existence and uniqueness in our setting are ensured by the Fundamental Theorem of Asset Pricing), by $w\in \R$ a fixed level of initial wealth, by $x\in \R$ a real number representing a constraint, by $u:\R\to \R\cup\{-\infty\}$ a concave and increasing function that is continuous (at the point where it potentially jumps to $-\infty$) and satisfies $\lim_{y\to-\infty}u(y)<x<\lim_{y\to\infty}u(y)$, and by $\rho:L^\infty\to(-\infty,\infty]$ a risk functional. We focus on the following five optimization problems:
\begin{enumerate}[(A)]
\item Risk minimization with a budget constraint:
$$
\mbox{minimize } \rho(X) \mbox{~over $X\in L^\infty$~subject to $\E_{\Q}[w-X]\le x$.}
$$\item Price minimization with   controlled risk:
$$
\mbox{minimize } \E_{\Q}[w-X]   \mbox{~over $X\in L^\infty$~subject to $\rho(X)\le x$.}
$$
\item Risk minimization with a target utility level:
$$
\mbox{minimize } \rho(X) \mbox{~over $X\in L^\infty$~subject to $\E[u(w-X)]= x$}.
$$
\item Worst-case utility with a reference risk assessment:
$$
\mbox{minimize } \E[u(w-X)] \mbox{~over $X\in L^\infty$~subject to $\rho(X)= x$.}
$$
\item Worst-case risk  with a reference risk assessment:
$$
\mbox{maximize } \rho'(X) \mbox{~over $X\in L^\infty$~subject to $\rho(X)= x$,}
$$
where $\rho'$ is an SSD-consistent functional that is continuous with respect to the $L^\infty$-norm.
\end{enumerate}

\smallskip

Problem (A) is an optimal investment problem minimizing the risk given a budget constraint. Conversely, problem (B) aims at minimizing the cost given a controlled risk level. Problem (C) is about minimizing the risk exposure with a target utility level, similar to the mean-variance problem of \cite{M52}. The interpretation of problems (D) and (E) is different from the first three problems: They are not about optimization over risk, but about ambiguity, i.e., in these problems the main concern is model risk. Indeed, the set $L^\infty$ may represent the class of plausible models for the distribution of a certain financial position of interest. In the case of problem (D), the assumption is that the only available information for $X$ is the risk figure $\rho(X)$, evaluated, e.g., by an expert or another decision maker. In this context, we are interested in determining the worst case utility among all possible models which agree with the evaluation $\rho(X)=x$ (see also Example 5.3 of \cite{WXZ19}). A similar interpretation can be given for problem (E).

\begin{proposition}\label{prop:opt}
Each of the optimization problems (A)-(E) relative to a benchmark-adjusted ES $\rho=\ES^g$ for $g\in\cG_{\ES}$ admits an optimal solution of the explicit form $Z+z$ where $Z\in L^\infty$ has the ES profile $g$ and $z\in\R$. Moreover, $Z$ is comonotonic with $\frac{d\Q}{d\p}$ in (A)-(B), and the (binding) constraint uniquely determines $z$ in each problem.
\end{proposition}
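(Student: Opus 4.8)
\emph{Setup and the key reduction.} The plan is to exploit the SSD-based representation from Theorem~\ref{prop:max}. Since $g\in\cG_{\ES}$, Lemma~\ref{lem: characterizing profiles} furnishes a benchmark $Z\in L^\infty$ with $\ES_p(Z)=g(p)$ for all $p\in[0,1]$ (here one uses $g(1)=\esssup Z<\infty$, so that $Z$ is bounded), and Theorem~\ref{prop:max} gives $\rho(X)=\ES^g(X)=\inf\{m\in\R\mid X-m\SSD Z\}$. Arguing as in the proof of ``(ii)$\Rightarrow$(iii)'' of Theorem~\ref{prop:max}, the infimum is attained, so the single identity
\[
X\SSD Z+\rho(X)\qquad\text{for every }X\in L^\infty
\]
holds and drives all five arguments. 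Note also $\rho(Z+z)=\rho(Z)+z=z$ by cash additivity, so any candidate of the announced form $Z+z$ has risk exactly $z$.

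\emph{Problems (C), (D) and (E).} These involve no pricing and follow directly from the identity above together with consistency with $\SSD$, using that $\SSD$ is shift invariant. For (E), any feasible $X$ satisfies $\rho(X)=x$, hence $X\SSD Z+x$; consistency of $\rho'$ with $\SSD$ yields $\rho'(X)\le\rho'(Z+x)$, and since $X^*=Z+x$ is itself feasible the maximum is attained there with $z=x$. For (D) the same inclusion gives $\E[u(w-X)]\ge\E[u(w-Z-x)]$, because $y\mapsto u(w+y)$ is concave and increasing and positive values are losses; thus $X^*=Z+x$ minimizes the expected utility, again with $z=x$. For (C), feasibility reads $\E[u(w-X)]=x$; from $X\SSD Z+\rho(X)$ we get $x=\E[u(w-X)]\ge\E[u(w-Z-\rho(X))]$, and since $t\mapsto\E[u(w-Z-t)]$ is continuous and strictly decreasing this forces $\rho(X)\ge z$, where $z$ uniquely solves $\E[u(w-Z-z)]=x$; the bound is met by $X^*=Z+z$.

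\emph{Problems (A) and (B).} Here the constraint or objective involves the price $\E_\Q[\cdot]$, and the new ingredient is comonotonic rearrangement with respect to $Y=d\Q/d\p$. Given $X\in L^\infty$, the atomless assumption provides $\widehat X\sim X$ comonotonic with $Y$; law invariance gives $\rho(\widehat X)=\rho(X)$, while the Hardy--Littlewood inequality gives $\E_\Q[\widehat X]=\E[\widehat X\,Y]\ge\E[XY]=\E_\Q[X]$. Hence one may restrict to random variables comonotonic with $Y$ without worsening the objective or feasibility. On this class the functional $X\mapsto\E_\Q[X]=\int_0^1 F_X^{-1}(t)F_Y^{-1}(t)\,\d t$ is a spectral risk measure with increasing spectrum $F_Y^{-1}\ge0$ of mass $\E[Y]=1$, hence itself consistent with $\SSD$: if $A,B$ are comonotonic with $Y$ and $A\SSD B$, then writing $H(p)=\int_p^1(F_B^{-1}-F_A^{-1})\ge0$ and integrating by parts against the increasing $F_Y^{-1}$ gives $\E_\Q[B]-\E_\Q[A]=H(0)F_Y^{-1}(0)+\int_{[0,1)}H\,\d F_Y^{-1}\ge0$. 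Let $Z^c\sim Z$ be the version comonotonic with $Y$. For a comonotone feasible $X$ in (A) with $r=\rho(X)$, the identity $X\SSD Z+r$ (equivalently $X\SSD Z^c+r$, as $\SSD$ depends only on law) and this monotonicity give $\E_\Q[X]\le\E_\Q[Z^c]+r$; with the budget constraint $\E_\Q[X]\ge w-x$ this forces $r\ge w-x-\E_\Q[Z^c]=:z$, a bound attained by $X^*=Z^c+z$, whose price binds the constraint and fixes $z$ uniquely. Problem (B) is dual: maximizing $\E_\Q[X]$ subject to $\rho(X)\le x$, the inclusion $X\SSD Z^c+x$ gives $\E_\Q[X]\le\E_\Q[Z^c]+x$, with equality at $X^*=Z^c+x$, so $z=x$.

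\emph{Main obstacle.} The routine cases are (C)--(E), where SSD-consistency does all the work. The crux is (A)--(B): the price $\E_\Q$ is \emph{not} SSD-monotone on all of $L^\infty$, so one must first reduce to the comonotone class by rearrangement and only there invoke the spectral representation to couple pricing with $\SSD$. Care is also needed to guarantee $Z\in L^\infty$ (i.e.\ $g(1)<\infty$) and that the infimum defining $\rho(X)$ is attained, so the pivotal identity $X\SSD Z+\rho(X)$ is available; when $g(1)=\infty$ one would instead approximate $Z$ by bounded truncations, which is where the $L^\infty$-continuity of $\rho'$ in (E) and the continuity of the utility integrals in (C)--(D) enter.
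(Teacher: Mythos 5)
Your proof is correct and follows essentially the same route as the paper: your pivotal identity $X\SSD Z+\rho(X)$ is exactly the paper's observation $X\SSD Y_X$ with $Y_X=Z+\rho(X)$, and your handling of (A)--(B) uses the same two ingredients as the paper's proof of (B), namely comonotone rearrangement via the Hardy--Littlewood inequality followed by SSD-monotonicity of $X\mapsto\E_\Q[X]$ on the class of random variables comonotonic with $d\Q/d\p$. The only differences are cosmetic: you prove (A) directly and establish the comonotone price monotonicity by an integration-by-parts argument where the paper instead cites Proposition 5.2 and equation (A.8) of \cite{MW16}, and your explicit uniqueness discussion for $z$ in (C) and the caveat that $g(1)<\infty$ is needed for $Z\in L^\infty$ are refinements the paper leaves implicit.
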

\begin{proof}
The result for the optimization problem (A) is a direct consequence of Proposition 5.2 in \cite{MW16}. Let $Z$ be comonotonic with $d\Q/d\p$ which has ES profile $g$ (comonotonicity is only relevant in problems (A) and (B)). Note that $\rho(Z)=0$. For any random variable $X\in L^\infty$, we set $Y_X=Z+\rho(X)$. It is clear that $\rho(Y_X) = \rho(X)$ and
$$
\ES_p(Y_X) =g(p) + \rho(X) = g(p) + \sup_{q\in [0,1]} \{\ES_q(X) -g(q)\} \ge \ES_p(X).
$$
Hence, $X\SSD Y_X$. This observation will be useful in the analysis below.
\begin{enumerate}[(i)]
\item
We first look at problem (B).
First, since both $X\mapsto \E_\Q[X]$ and $\rho$ are translation-invariant, the condition $\rho(X)\le x$ is binding, and problem (B) is equivalent to
maximizing $\E_\Q [X]$ over $X\in L^\infty$ such that $\rho(X)= x$.
Let $X\in L^\infty$ be any random variable with $\rho(X)=x$ and let $X'$ be identically distributed as $X$ and comonotonic with $d\Q/d\p$. Since $X'\sim X$, by the Hardy--Littlewood inequality (see, e.g., Remark 3.25 of \cite{R13}),
we have $\E_\Q[X] \le \E_\Q[X']$.
Moreover, for any random variable $Y\in L^\infty$ that is comonotonic with $d \Q/d \p$, we can write (see, e.g., (A.8) of \cite{MW16})
$$
\E_\Q[Y] =
\int_0^1 \ES_p(Y) \d \mu (p)
$$
for some Borel probability measure $\mu$ on $[0,1]$. Hence, $X'\SSD Y_X$ implies $\E_\Q[X'] \le \E_\Q[Y_X]$, and we obtain
$$
\E_\Q[X] \le \E_\Q[X'] \le \E_\Q[Y_X].
$$
Note also that $\rho(Y_X)=\rho(X)=x$. Hence, for any random variable $X\in L^\infty$, there exists $Z+z$ for some $z\in \R$ which dominates $X$ for problem (B). Since both the constraint and the objective are continuous in $z\in \R$, an optimizer of the form $Z+z$ exists.
\item
We next look at problem (C).
Let $X\in L^\infty$ be any random variable such that $\E[u(w-X)] =x$.
The aforementioned fact $X\SSD Y_X$ implies that
$\E[u(w-Y)] \le \E[u(w-X)]=x$ since $u$ is a concave utility function.
Therefore, there exists $\varepsilon\geq0$ such that $\E[u(w-(Y-\varepsilon))]=x$, and we take the largest $\varepsilon$ satisfying this equality, which is obviously finite. Let $z=\rho(X)-\varepsilon$. It is then clear that $\E[u(w-(Z+z))]=\E[u(w-X)]=x$ and $\rho(Z+z)= \rho(Y-\varepsilon) =\rho(X)-\varepsilon \le \rho(X)$. Hence, $Z+z$ dominates $X$ as an optimizer for problem (C). Since both the constraint and the objective are continuous in $z\in \R$, an optimizer of the form $Z+z$ exists.
\item Problems (D) and (E) can be dealt with using similar arguments.\qedhere
\end{enumerate}
\end{proof}

\smallskip

\begin{remark}
(i) Recall that ES does not belong to the class of SSD-based risk measures. As a consequence, the results in this section do not directly apply to ES. In particular, although ES is consistent with SSD, its acceptance set does not have a minimum SSD element as required by Proposition \ref{prop:max}. We refer to \cite{WZ20} for a different characterization of ES.

\smallskip

(ii) In the context of decision theory and, specifically, portfolio selection, it is sometimes argued that (second order) stochastic dominance is too extreme in the sense that it ranks risks according to the simultaneous preferences of \emph{every} risk-averse agent, thus including utility functions that may lead to counterintuitive outcomes. A typical example is the one proposed by \cite{LL02}. Consider a portfolio that pays one million dollars in $99\%$ of cases and nothing otherwise and another portfolio that pays one dollar with certainty. According to the sign convention adopted in this paper, the corresponding payoffs are given by
\[
X=\begin{cases}
0 & \mbox{with probability $1\%$}\\
-10^6 & \mbox{with probability $99\%$}
\end{cases}
 \ \ \ \mbox{and} \ \ \ Y=-1.
\]
Even though $X$ does not dominate $Y$ with respect to SSD, most agents prefer $X$ to $Y$. Thus, the authors argue for the necessity of relaxing SSD in favor of a more reasonable notion. We point out that our approach yields a novel and reasonable generalization of SSD. First, consider the risk profile defined by $g(p)=\ES_p(Y)=-1$ for every $p\in[0,1]$ and note that $X$ is acceptable under $\ES^g$ precisely when $X\SSD Y$. Note also that
\[
\ES_p(X)\le g(p) \ \iff \ p\le\bar{p}:=1-\frac{10^{-4}}{10^6-1}\approx1-10^{-10}.
\]
This fact has two implications. On the one hand, it confirms that $X$ does not dominate $Y$ with respect to SSD and highlights that this failure is due to the behavior of $X$ in the far region of its left tail. On the other hand, it suggests that it is enough to consider the new risk profile defined by $h(p)=g(p)$ for $p\le\bar{p}$ and $h(p)=\infty$ otherwise to make $X$ acceptable under $\ES^h$. In other words, moving from $g$ to $h$ is equivalent to moving from SSD to a relaxed form of SSD that enlarges the spectrum of acceptability in portfolio selection problems. However, note that $\ES^h$ is not an SSD-based risk measure and, hence, the existence results obtained above do not apply to it. A systematic study of optimization problems under constraints of $\ES^h$ type requires further research. 
\end{remark}

{\small
\bibliographystyle{apalike}
\bibliography{biblioRM}
}

\end{document}